\theoremstyle{plain}
\newtheorem{thm}{Theorem}
\newtheorem{defin}{Definition}[section]
\newtheorem{rmk}[defin]{Remark}
\newtheorem{cor}[defin]{Corollary}
\newtheorem{lemma}[defin]{Lemma}
\newtheorem*{thm*}{Theorem}
\newenvironment{manualtheorem}[1]{%
  \manualtheoreminner
}{\endmanualtheoreminner}
\def\mes{\operatorname{mes}}
\def\Res{\operatorname{Res}}
\def\dist{\operatorname{dist}}
\def\coupling{{g}}
\begin{document}

\title{Upper bounds on quantum dynamics in arbitrary dimension}
\author{Mira Shamis\textsuperscript{1}  and Sasha Sodin\textsuperscript{1,2} }
\footnotetext[1]{School of Mathematical Sciences,
Queen Mary University of London, 
Mile End Road, London E1 4NS, England. email: [m.shamis, a.sodin]@qmul.ac.uk.}
\footnotetext[2]{Supported in part by the European Research Council starting grant 639305 (SPECTRUM), a Royal Society Wolfson Research Merit Award (WM170012), and a Philip Leverhulme Prize of the Leverhulme Trust (PLP-2020-064).}
\maketitle

\begin{abstract} Motivated by the research on upper bounds on the rate of quantum transport for one-dimensional operators, particularly, the recent works of Jitomirskaya--Liu and Jitomirskaya--Powell and the earlier ones of Damanik--Tcheremchantsev, we  propose a method to prove similar bounds in arbitrary dimension. The method applies  both to Schr\"odinger and to long-range operators.

In the case of ergodic operators, one can use  large deviation estimates for the Green function in finite volumes to verify the assumptions of our general theorem. Such estimates have been proved for numerous classes of quasiperiodic operators in one and higher dimension, starting from the works of Bourgain, Goldstein, and Schlag.

One of the applications is a power-logarithmic bound on  the quantum transport defined by a multidimensional discrete Schr\"odinger (or even long-range) operator associated with an irrational shift, valid for all Diophantine frequencies  and uniformly for all phases. To the best of our knowledge, these are the first  results on the quantum dynamics for quasiperiodic operators in dimension greater than one that do not require exclusion of a positive measure of phases. Moreover, and in contrast to localisation, the estimates are uniform in the phase. 

The arguments are also applicable to  ergodic operators corresponding to other kinds of base dynamics, such as the skew-shift.

\end{abstract}

\section{Introduction} 

\paragraph{Background} Consider a bounded self-adjoint operator $H$ acting on $\ell_2(\mathbb Z^\nu)$ as a sum of a convolution  with  a symmetric kernel $A: \mathbb Z^\nu \to \mathbb R$ and multiplication  by  a potential $V: \mathbb Z^\nu \to \mathbb R$: 
\begin{equation}\label{eq:defop} (H \phi)(w) = (A * \phi + V \phi)(w) = \sum_{v \in \mathbb Z^\nu} A(w-v) \phi(v) + V(w) \phi(w)~.\end{equation}
Throughout this paper, we assume that $A$ decays exponentially and $V$ is bounded, i.e.\ there exists $\varsigma > 0$ such that
\begin{equation}\label{eq:assum} \sum_{w \in \mathbb Z^\nu} |A(w)| e^{\varsigma \|w\|} < \infty~, \quad  \|V\|_\infty < \infty~, \end{equation}
where $\|w\|$ is the max-norm of $w \in \mathbb Z^\nu$. Unless otherwise stated, all the constants ($C, c$ et cet.) will depend on $\nu$, $\varsigma$, and the values of the quantities in (\ref{eq:assum}),  but not on the other parameters of the problem.

When $H$ is the Hamiltonian of a quantum particle evolving according to Schr\"odinger dynamics, one of the main objects of interest is the quantum transport probabilities
\begin{equation}\label{eq:defP}
P_t(w)  = |e^{itH}(0, w)|^2~, \quad t \geq 0~, \, w \in \mathbb Z^\nu~,
\end{equation}
defined by the Schr\"odinger dynamics of $H$, and the associated transport moments 
\begin{equation}\label{eq:Mp}
M_p(t) = \sum_{w \in \mathbb Z^\nu} P_t(w) \|w\|^p~, \quad t \geq 0~, \, p > 0~.
\end{equation}
The quantity $P_t(w)$ is the probability that a quantum particle starting from the origin is found at $w$ after time $t$, whereas $M_p(t)$ is the $p$-th moment of the position operator. It is known (see (\ref{eq:subbal}) below) that
\begin{equation}\label{eq:subbal-1}
P_t(w) \leq C e^{-c \|w\|}~, \quad \|w \| \geq Ct~,
\end{equation}
and consequently
\begin{equation}\label{eq:subbal-2}
M_p(t) \leq C^p (t^p + (p+\nu)^{p+\nu})~, 
\end{equation}
i.e.\ the motion of the quantum particle is not faster than ballistic. The bounds (\ref{eq:subbal-1}) and (\ref{eq:subbal-2}) give the correct order of growth when $V \equiv 0$, and, more generally, when $V$ is periodic. The Guarneri--Combes--Last bound asserts  that, for each $p$,  $M_p(T)$ grows at least as fast as $T$ to the power $p D/ \nu$, where $D$ is the upper  dimension of the spectral measure (see the work of Last, \cite[Theorem 6.1]{Last}, for the precise formulation, and Landrigan--Powell,  \cite[Theorem 7.3]{LP}, for a recent generalisation). 

On the other extreme, $H$ is said to exhibit dynamical localisation if the family $\{P_t\}_{t \geq 0}$ is tight (precompact in weak topology). A stronger version of localisation is the boundedness of $M_p(t)$ as $t \to \infty$, for some, or even for all, $p > 0$. Many examples of dynamical localisation appear in the framework of random operators with independent, identically distributed potential: for such  operators, dynamical localisation  holds when $\nu  = 1$ and, for $\nu > 1$, when the magnitude of fluctuations of the potential is large enough. The first such result was established by Aizenman \cite{Aiz}, whose work followed extensive research of the spectral properties of such operators (particularly, by Goldsheid--Molchanov--Pastur \cite{GMP} in dimension $\nu=1$ and by Fr\"ohlich--Spencer \cite{FS} and Aizenman--Molchanov \cite{AM} in higher dimension);  see   \cite{AW} for a survey of prior and subsequent developments. 

According to a theorem of Ruelle, Amrein, Georgescu, and Enss (see \cite[\S 2.4]{AW}), dynamical localisation implies that the spectral measure of $H$ at the origin is pure point. The reverse implication is in general not true. It is also not true that low dimensionality of the spectral measures implies any upper bounds on the moments $M_p(t)$: an example of Last \cite{Last} shows that $M_p(t)/t^p$ can tend to zero arbitrarily slowly (as $t \to \infty$) even for operators the spectral measures of which are supported on a set of zero Hausdorff dimension.

Our goal here is to obtain upper bounds on $M_p(t)$ (and on the rate of escape of probability for $P_t$) for operators the spectrum of which is ``thin'' but either dynamical localisation  fails or it remains outside of the scope of the existing technology (see Damanik and Fillman \cite{DF} for a recent survey devoted to  discrete and continual Schr\"odinger operators with thin spectra). 

This circle of questions has been extensively studied for $\nu=1$, and particularly for one-dimensional discrete Schr\"odinger operators (corresponding to $A = \mathbbm 1_{\{\pm 1\}}$ in (\ref{eq:defop})).  A general framework allowing to upper-bound the quantum transport probabilities in terms of finite volume Green functions was developed by Damanik and Tcheremchantsev \cite{DTch07,DTch08}, in the setting of discrete Schr\"odinger operators in dimension $\nu = 1$ (the method  used  in \cite{DTch07} to control the integrals of transfer matrix norms   makes use of an idea applied by Jitomirskaya--Last \cite{JL} in their study of the spectral dimension) . Building on this framework, \cite{DTch07,DTch08} established sub-power-law ($t^{-1} \log M_p(t)  \to 0$) growth of moments for several classes of ergodic one-dimensional operators with positive Lyapunov exponent, valid for every (rather than just almost every) phase. These results have been generalised and strengthened by Jitomirskaya--Mavi \cite{JM} and Han--Jitomirskaya \cite{HJ}. Recently, Jitomirskaya and Powell \cite{JP} sharpened some of these upper bounds to power-logarithmic growth, $M_p(t) \lesssim \log^{c p}  (t + e)$ (here and forth $A \lesssim B$ if $A/B$ is bounded).
All of these results make use of transfer matrices, and are thus specific to one-dimensional Schr\"odinger operators.

Beyond the  discrete Schr\"odinger setting, Jitomirskaya and Liu \cite{JL} developed a method to prove sub-power-law upper bounds on the moments for long-range operators of the form (\ref{eq:defop}) (with decaying kernels $A$ as in (\ref{eq:assum})). However, they still rely on the specifics of one dimension: transport is slowed down by a  barrier separating the origin from infinity. We are not aware of any previous upper bounds on $M_p(t)$ for ergodic Schr\"odinger operators in higher dimension, except for those which are obtained as a consequence of dynamical localisation, and this is the gap that we would like to rectify.  

\medskip \noindent
Similarly to some of these earlier works, the input we require is a certain bound on the Green function in finite volume. One of the main new features of our work is  that we allow an exceptional set of energies of small measure at which there is no control of the Green function: a complex-analytic argument is employed to  show that the energies in this exceptional set do not influence the quantum dynamics on the relevant timescales. The great advantage (from our point of view) is that the assumptions of our main theorem can be verified using fixed-energy analysis. For comparison, the proofs of dynamical localisation typically require simultaneous treatment of all the energies in the spectrum (the arguments of Aizenman \cite{Aiz} for random operators are a notable exception that proves the rule).

We apply our general result to ergodic, and, more specifically, quasiperiodic operators.  In this case the assumptions of the theorem can be deduced using the large deviation estimates for the finite-volume Green functions. We recall that such large deviation estimates form one of the two main building blocks in the strategy, introduced by Bourgain and Goldstein in \cite{BG} and developped in subsequent works (see the monograph of Bourgain \cite{Bourg-book} and references therein, and also the lecture notes of Schlag \cite{Schlag}), for proving Anderson localisation in the quasiperiodic setting. For many classes of operators, such estimates can be proved under arithmetic assumptions on the frequencies. We refer to the recent work of Liu \cite{Liu}, who strengthened, extended, and quantified many of the previous results.

On the other hand, the subsequent arguments leading to locali\-sation are of more delicate nature, and often require non-arithme\-tic or non-explicit assumptions on the frequency \cite{BGS-skew} or even excluding a positive measure of phases \cite{BGS-qp}. As put forth in the cited works \cite{DTch07,DTch08} and especially in \cite{JL,JP}, sub-power-law and power-loga\-rithmic estimates on the transport moments is a moderate relaxation of dynamical localisation that could be valid for all phases (without exclusion of a set of positive or even zero measure). We contribute to this line of research by providing a new strategy, which can be used to recover most of the existing results and prove new ones, lying beyond the range of previous methods. We remark that while the general bounds obtained from the main theorem are valid for almost all phases, an additional argument allows to upgrade them to uniform estimates for all phases. It is known that  dynamical localisation  can not hold uniformly in the phase; in this aspect,  the slow quantum motion  considered in this paper differs significantly from localisation.

In Section~\ref{s:appl}, we consider quasiperiodic operators in arbitrary dimension with potential defined by an irrational shift. The main application is stated as Corollary~\ref{cor:nu-k}; to the best of our knowledge, it provides the first bound on the quantum dynamics associated with quasiperiodic operators in dimension $\nu > 1$ which is valid without exclusion of a positive measure of  phases. An additional feature of this  result is that the assumptions on the frequency are purely arithmetic.
\begin{manualtheorem}{Main application}[simplified version of Corollary~\ref{cor:nu-k}]
Let  $F: \mathbb T^k \to \mathbb R$ be an analytic function not constant on any line segment, and let $\alpha_1, \dots, \alpha_\nu \in \mathbb R^k \setminus \mathbb Q^k$.
For $\coupling > 0$ and $\theta \in \mathbb T^k$, consider an operator  $H_\theta$  acting on $\ell_2(\mathbb Z^\nu)$ as in (\ref{eq:defop})--(\ref{eq:assum}),  where $V_\theta(w) =  \coupling F(\theta + w_1 \alpha_1 + \cdots + w_\nu \alpha_\nu)$.   
Suppose   
\[\begin{cases}
\text{either $k = 1$, $\dist(\alpha w, \mathbb Z) \gtrsim \|w\|^{-\kappa}$ ($w \in \mathbb Z^\nu \setminus \{0\}$) for some $\kappa  \geq \nu$} \\
\text{or $\nu = k = 2$, $\alpha_{1,2} = \alpha_{2,1} = 0$, $\dist(n \alpha_{j,j}, \mathbb Z) \gtrsim |n|^{-\kappa}$ ($n \in \mathbb Z \setminus \{0\}$) for some $1 \leq \kappa < 13/12$.}
\end{cases}\]
Then for $\coupling \geq \coupling_0(\kappa, F)$ the moments $M_p(t)$ grow at most power-logarithmically, in the following strong sense:
\[ \sup_{p > 0} \limsup_{t \to \infty} \frac{\log \sup_{\theta\in \mathbb T^k} \sum_{w \in \mathbb Z^\nu} |e^{i t H_\theta}(0, w)|^2 \|w\|^p}{p  \log \log t} < \infty~.\]
\end{manualtheorem}
In Corollaries~\ref{cor:1-1}, \ref{cor:1-k} we use our method to recover and augment some of the results in dimension $\nu=1$ proved by Damanik--Tcheremchantsev \cite{DTch07}, Jitomirskaya--Liu \cite{JL}, and Jitomirskaya--Powell \cite{JP}.

In Section~\ref{s:appl2}, we discuss a more general framework applicable to other base dynamics. As an illustration, we consider the one-dimensional skew-shift; Corollary~\ref{cor:skew}  generalises a recent result of Jitomirskaya--Powell \cite{JP} to the long-range setting.

As we mentioned already, the applications discussed in Sections~\ref{s:appl} and \ref{s:appl2} rely crucially on the existing large deviation estimates; particularly, we borrow several such estimates from the work of Liu \cite{Liu}.
 
\paragraph{The main result}

To state the main result, introduce the following notation. For a finite volume $\Lambda \subset \mathbb Z^\nu$, denote by $H_\Lambda$ the restriction of $H$ to $\Lambda$ (i.e.\ $H_\Lambda = P_\Lambda H P_\Lambda^*$, where $P_\Lambda:  \ell_2(\mathbb Z^\nu) \to \ell_2(\Lambda)$ is the co\"ordinate projection). For $z$ in the resolvent set of $H$, set  $G_z = (H - z)^{-1}$; similarly, set  $G_{z, \Lambda} = (H_\Lambda - z)^{-1}$. If $\Lambda \ni 0$ and $\epsilon > 0$, denote by
\[ \Res^*(\Lambda; \epsilon) = 
\left\{ E \in \mathbb R\,:\, \exists v \in \Lambda, w \notin \Lambda \, \text{ such that } \, |G_{E, \Lambda}(0, v)| > \epsilon, \, |A(v - w)| > \epsilon \right\}\]
the set of real energies at which the Green function does not decay sufficiently well from the center to the neighbourhood of the boundary. The theorem controls the  escape of probability of $P_t$ in terms of $\epsilon$ and the Lebesgue measure $\mes (\Res^*(\Lambda; \epsilon))$ of this set. The precise formulation   is as follows.

\begin{thm}\label{thm} Let $H$ be as in (\ref{eq:defop})--(\ref{eq:assum}). 
Let $\mathbb Z^\nu \supset \Lambda_1, \cdots, \Lambda_M \ni 0$ be finite volumes, $\Lambda = \bigcup_{m=1}^M \Lambda_m$. Assume that $0 < \epsilon \leq \delta^{8(\nu + 1)M}  \leq 1$ are such that
\begin{equation} \label{eq:assum-thm}
\mes \bigcap_{m=1}^M \Res^*(\Lambda_m; \epsilon) \leq \delta~.
\end{equation}
Then for $w \notin  \Lambda$ and $0 \leq t  \leq \frac{1}{40 M \delta} |\log \epsilon|$
\begin{equation}\label{eq:thm}
P_t(w) \leq C\left[  e^{-c \|w\|} \delta^2 +    |\Lambda|^2  \epsilon^{\frac{1}{5 M}} \right]~.
\end{equation}
\end{thm}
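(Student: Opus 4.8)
The plan is to pass from the unitary group to the resolvent at imaginary part $\eta$ comparable to the reciprocal of the time, to split the energy axis into the set on which the finite-volume Green functions decay and the exceptional set $\bigcap_{m=1}^{M}\Res^*(\Lambda_m;\epsilon)$ of measure $\le\delta$, to treat the former by the geometric resolvent equation, and to absorb the latter by a complex-analytic argument. For the first reduction I would use the spectral-theorem identity $\int_0^\infty e^{-2\eta t}P_t(w)\,dt=\frac{1}{2\pi}\int_{\mathbb R}|G_{E+i\eta}(0,w)|^2\,dE$ together with the fact that $P_t(w)$ is Lipschitz in $t$ with constant depending only on the quantities in (\ref{eq:assum}); these give, for $t$ up to a fixed multiple of $1/\eta$, a bound $P_t(w)\le C\bigl(\int_{\mathbb R}|G_{E+i\eta}(0,w)|^2\,dE\bigr)^{1/2}$, so taking $\eta$ a small multiple of $\delta/|\log\epsilon|$ covers the whole range $0\le t\le\frac{1}{40M\delta}|\log\epsilon|$. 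For $w$ well outside the light cone, $\|w\|\gtrsim t$, the ballistic bound (\ref{eq:subbal-1}) gives $P_t(w)\le Ce^{-c\|w\|}$; since $t\le\frac{1}{40M\delta}|\log\epsilon|$ a fraction of this exponent can be traded for the factor $\delta^2$, which accounts for the first term of (\ref{eq:thm}). Since $\sigma(H)$ lies in a bounded interval, the part of $\int|G_{E+i\eta}(0,w)|^2\,dE$ outside it is controlled by Combes--Thomas, and it remains to bound the integral over a fixed bounded interval containing $\sigma(H)$, for $w\notin\Lambda$.

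For an energy $E$ outside the exceptional set there is some $m$ with $E\notin\Res^*(\Lambda_m;\epsilon)$, which I would feed into the geometric resolvent equation
\[
G_\zeta(0,w)=-\sum_{v\in\Lambda_m,\ u\notin\Lambda_m}G_{\zeta,\Lambda_m}(0,v)\,A(v-u)\,G_\zeta(u,w)\qquad (w\notin\Lambda_m),
\]
evaluated at $\zeta=E+i\eta$: by the definition of $\Res^*$, for real $E$ one of the two factors $G_{E,\Lambda_m}(0,v)$ and $A(v-u)$ is $\le\epsilon$ in each relevant term, and replacing $E$ by $E+i\eta$ alters $G_{E,\Lambda_m}$ appreciably only for $E$ within $\eta$ of $\sigma(H_{\Lambda_m})$ — a set of small measure that I absorb into the exceptional set. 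Splitting the sum accordingly, estimating the long-range part of $A$ by the exponential weight in (\ref{eq:assum}) and the short-range part, as well as $G_\zeta(u,w)$, by $\|G_\zeta\|\le1/\eta$, one reaches on the good energies a bound of the form $|G_{E+i\eta}(0,w)|\le C\,|\Lambda|^{O(1)}\epsilon^{c}\eta^{-O(1)}$, which the hypotheses $0<\epsilon\le\delta^{8(\nu+1)M}$ and the choice of $\eta$ turn into $|G_{E+i\eta}(0,w)|\le\epsilon^{c_1}$ with $c_1>0$; the loss of a factor $\sim 1/M$ in the exponent, responsible for the $\epsilon^{1/5M}$ in (\ref{eq:thm}), enters here when powers of $\delta$ are converted to powers of $\epsilon$.

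What remains — and what I expect to be the real obstacle — is to absorb the exceptional set, on which only the trivial bound $|G_{E+i\eta}(0,w)|\le1/\eta$ is at hand: as this set, though small in Lebesgue measure, need not be small in harmonic measure, one cannot simply propagate the smallness of $G_{E+i\eta}(0,w)$ from the good energies, and a naive split would cost a factor $\delta/\eta^2\gg1$. Here one should exploit that $E\mapsto G_{E+i\eta}(0,w)$ is analytic in $\{\operatorname{Im}E>-\eta\}$ with modulus $\lesssim1/\eta$ there, and invoke a quantitative propagation-of-smallness estimate of Cartan--Jensen--Remez type — of the kind underlying the complex-analytic lemmas in the Bourgain--Goldstein--Schlag theory — converting ``$|G_{E+i\eta}(0,w)|\le\epsilon^{c_1}$ off a set of measure $\le2\delta$'' into a bound on $\int_{\mathbb R}|G_{E+i\eta}(0,w)|^2\,dE$ whose loss is controlled precisely when $\eta\gtrsim\delta/|\log\epsilon|$, i.e.\ for $t\le\frac{1}{40M\delta}|\log\epsilon|$. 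Making this potential-theoretic step quantitative enough to reach the full timescale, while keeping track of the poles of the $G_{\zeta,\Lambda_m}$ and of the interplay of the scales $\eta\asymp\delta/|\log\epsilon|$, $\delta$ and $|\log\epsilon|$, is where I expect the work to concentrate; the union over the $M$ volumes and the two boundary summations then yield, together with the first paragraph, the bound (\ref{eq:thm}). The conceptual gain of tolerating an exceptional set at all — and the reason hypothesis (\ref{eq:assum-thm}) is phrased as it is — is that it constrains the finite-volume Green functions only at individual real energies, never simultaneously over the spectrum.
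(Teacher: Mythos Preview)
Your overall plan---reduce to resolvents, expand via the geometric resolvent identity, and handle the exceptional energies by a complex-analytic argument---is the right shape, but your choice of height and the order of the steps create a real gap. You work at $\eta\asymp\delta/|\log\epsilon|$ (forced by your Parseval reduction), apply the resolvent identity there, and defer the complex-analytic step to the full-volume function $E\mapsto G_{E+i\eta}(0,w)$. At that height the exceptional set $B$, though of Lebesgue measure $\le\delta$, can have Poisson measure arbitrarily close to~$1$ (since $\delta/\eta\sim|\log\epsilon|\gg1$), so no two-constants or subharmonic-majorant argument will propagate smallness from $\mathbb R\setminus B$ into $B$; your appeal to a ``Cartan--Jensen--Remez type'' estimate is precisely the missing lemma, and I do not see one that turns ``$|G|\le\epsilon^{c_1}$ off a set of measure $\le2\delta$'' into a useful bound on $\int_B|G|^2$ at this scale. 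A secondary issue: passing from $G_{E,\Lambda_m}(0,v)$ to $G_{E+i\eta,\Lambda_m}(0,v)$ is not innocuous---the set of $E$ on which they differ by more than $\epsilon$ has measure of order $|\Lambda_m|^{3/2}\sqrt{\eta/\epsilon}\gg\delta$, so it cannot simply be ``absorbed into the exceptional set'' as you propose.

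The paper avoids both difficulties by reversing the order. It applies the subharmonic argument \emph{first}, and to the \emph{finite-volume} Green functions $u_{m,v}(z)=G_{z,\Lambda_m}(0,v)$, which are rational functions $\sum_j a_j/(z-b_j)$ with real $a_j,b_j$ and $\sum_j|a_j|\le1$; Boole's identity then controls $\int\log_+|u_{m,v}|\,d\mu_z$. Crucially, this is done at height $|y|=\delta$, where the exceptional set has Poisson measure $\le\tfrac12$. A pigeonhole over $m$ gives, for each such $z$, one volume with good-set Poisson measure $\ge\tfrac1{2M}$---this, not the $\epsilon$--$\delta$ conversion, is the true source of the $1/M$ in the exponent---and subharmonicity yields $|G_{z,\Lambda_m}(0,v)|\le 4\epsilon^{1/(2M)}/\delta$ pointwise on the whole line $\Im z=\delta$. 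Only then is the resolvent identity invoked (at height $\delta$, with Combes--Thomas replacing your $1/\eta$), and time enters through the contour integral $e^{itH}(0,w)=-\tfrac{1}{2\pi i}\oint_{\mathcal C}e^{itz}G_z(0,w)\,dz$ along the boundary of $\{|\Im z|\le\delta,\ |\Re z|\le\|H\|+1\}$: the factor $e^{t\delta}$ is what produces the restriction $t\le\tfrac{1}{40M\delta}|\log\epsilon|$. In short, raise the contour to $\delta$ \emph{before} doing any analysis, and apply the Poisson--Boole argument to the finite-volume resolvents rather than to the full one.
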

 
\begin{rmk} The arguments leading to Theorem~\ref{thm} are rather flexible, and allow for several possible modifications, of which we mention three:
\begin{enumerate}
\item the sets $\Res^*(\Lambda; \epsilon)$ in the assumptions can be replaced with 
\[ \Res_\delta^*(\Lambda; \epsilon) = 
\left\{ E \in \mathbb R\,:\, \exists v \in \Lambda, w \notin \Lambda \, \text{s.t.} \, |G_{E+i\delta, \Lambda}(0, v)| > \epsilon, \, |A(v - w)| > \epsilon \right\}~,\]
i.e.\ we can slightly complexify the energies;
\item  in the case $M = 1$ (one finite volume), the power of $\epsilon$ in the right-hand side of (\ref{eq:thm}) can be made arbitrarily close to $2$ by adjusting the other parameters (cf.\ the second line of (\ref{sme})); 
\item one can make  the exponent $c$ in the first term of (\ref{eq:thm}) arbitrarily close to $2 \varsigma$, where $\varsigma$ comes from (\ref{eq:assum}).
\end{enumerate}
\end{rmk}

\paragraph{Acknowledgement} We are grateful to Svetlana Jitomirskaya for very helpful correspondence.

\section{Irrational shift}\label{s:appl}

In this section we consider quasiperiodic operators in dimension $\nu \geq 1$ generated by an irrational shift. Let $k \geq 1$; we consider the $k$-dimensional torus $\mathbb T^k$ as a probability space, and denote the  Lebesgue measure on it by $\mathbb P$. Let $\alpha_1, \cdots, \alpha_\nu \in \mathbb T^k \setminus \mathbb Q^k$ be a tuple of frequencies. For $w  = (w_1, \cdots, w_\nu) \in \mathbb Z^\nu$, let 
\[ T^v: \mathbb T^k \to \mathbb T^k~, \quad T^w (\theta) = \theta + w_1 \alpha_1 + \cdots + w_\nu \alpha_\nu = \theta + \alpha w~.\]
Finally, let $F: \mathbb T^k \to \mathbb R$ be a hull function, which we always assume to be non-constant. Given this data, define a family of operators $H_\theta$, $\theta \in \mathbb T^k$, acting on $\ell_2(\mathbb Z^\nu)$ via
\begin{equation}\label{eq:shift}
H_\theta \phi = A*\phi + V_\theta \phi~, \quad V_\theta(w) = F(T^w \theta)~. 
\end{equation}
To emphasise the dependence on $\theta$ we denote the corresponding quantum probabilities by $P_{t, \theta}(w)$, and the moments by $M_p(t, \theta)$; sometimes we also emphasise the dependence of $\Res^*(\Lambda, \delta) = \Res^*(\Lambda, \delta; \theta)$ on $\theta$. Also set $P_{t}(w) = \sup_\theta P_{t, \theta}(w)$ and $M_p(t) =\sup_\theta M_p(t, \theta)$.  

The theorem implies the following corollary which controls the quantum probabilities uniformly in $\theta$. In the formulation,   we gave preference to shorter formul{\ae} over the sharpest possible bounds. 

\begin{cor}\label{cor:shift} Assume that $F$ is Lipschitz-continuous, and that $M \geq 1$, $N \geq 1$, $0\in \Lambda_1, \cdots, \Lambda_m \subset [-N,N]^\nu$, and $0 < \delta \leq 1$ are such that
\begin{equation}\label{eq:LD-G}
\forall E \in \mathbb R \,\,\,  \mathbb P \left\{ E \in\bigcap_{m=1}^M \Res^*(\Lambda_m, \delta^{8(\nu+1)M}; \theta) \right\} \leq \delta^{2k+1}~.
\end{equation} 
Then we have (for $C>0$ depending on $\nu$, $k$, (\ref{eq:assum}) and the Lipschitz constant of $F$):
\begin{equation}
\forall  \|w\| > N~, \, t \leq \frac{1}{\delta}: \quad  P_{t}(w) \leq C  N^{2\nu} \delta~.
\end{equation}
\end{cor}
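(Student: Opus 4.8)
The plan is to deduce Corollary~\ref{cor:shift} from Theorem~\ref{thm} by an averaging argument over the phase $\theta$, followed by an application of Fatou/Fubini to trade the ``for a.e.\ $\theta$'' conclusion of the theorem for a uniform-in-$\theta$ bound. First I would set $\epsilon = \delta^{8(\nu+1)M}$ and, for each fixed $\theta$, let $X(\theta) = \mes\bigl(\bigcap_{m=1}^M \Res^*(\Lambda_m, \epsilon; \theta)\bigr)$. The hypothesis (\ref{eq:LD-G}) says that for each fixed energy $E$ the probability that $E$ lies in the intersection is at most $\delta^{2k+1}$; integrating over $E$ (Fubini, since the sets are measurable in $(E,\theta)$) gives $\int_{\mathbb T^k} X(\theta)\, d\mathbb P(\theta) \le C(N)\,\delta^{2k+1}$, where the $E$-integral is effectively over a bounded interval of length $O(\|A\|_1 + \|V\|_\infty + N)$ since $\Res^*$ is empty outside the numerical range. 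Hence by Chebyshev, $X(\theta) \le \delta^{2k}$ except on a set $B \subset \mathbb T^k$ of measure $\mathbb P(B) \le C(N)\,\delta$.

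Next, for every ``good'' $\theta \notin B$ the assumption (\ref{eq:assum-thm}) of Theorem~\ref{thm} holds with the same $\epsilon$ and with $\delta$ replaced by $\delta^{2k} \le \delta$ (monotonicity is fine, and one should check $0 < \epsilon \le (\delta^{2k})^{8(\nu+1)M}$, which holds since $\epsilon = \delta^{8(\nu+1)M} \le \delta^{2k\cdot 8(\nu+1)M}$ — actually the inequality goes the wrong way, so one must instead apply the theorem with the honest bound $X(\theta)\le\delta^{2k}$ only after noting $\epsilon\le(\delta^{2k})^{8(\nu+1)M}$ requires $8(\nu+1)M \ge 2k\cdot 8(\nu+1)M$; the clean fix is to just use $X(\theta)\le\delta\le 1$, i.e.\ apply the theorem with its $\delta$ equal to our $\delta$, for which $\epsilon=\delta^{8(\nu+1)M}$ is exactly the stated requirement). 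So: for $\theta\notin B$, Theorem~\ref{thm} gives, for all $w\notin\Lambda$ (in particular $\|w\|>N$ forces $w\notin[-N,N]^\nu\supseteq\Lambda$) and all $t\le \frac{1}{40M\delta}|\log\epsilon| = \frac{8(\nu+1)M}{40M\delta}|\log\delta| \ge \frac{1}{\delta}$ for $\delta$ small,
\[ P_{t,\theta}(w) \le C\Bigl[e^{-c\|w\|}\delta^2 + |\Lambda|^2 \epsilon^{1/(5M)}\Bigr] \le C\bigl[\delta^2 + N^{2\nu}\delta^{8(\nu+1)/5}\bigr] \le C N^{2\nu}\delta. \]

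Finally I would remove the exceptional set $B$. The point is that $P_{t,\theta}(w) = |e^{itH_\theta}(0,w)|^2 \le 1$ always, and — crucially — the map $\theta \mapsto e^{itH_\theta}(0,w)$ is continuous (indeed $H_\theta$ depends continuously, even Lipschitz-continuously, on $\theta$ in operator norm because $F$ is Lipschitz and $V_\theta(w)=F(T^w\theta)$, and matrix elements of $e^{itH}$ are Lipschitz in $H$ for fixed $t$). Therefore the good bound $P_{t,\theta}(w)\le CN^{2\nu}\delta$, valid on the dense full-measure set $\mathbb T^k\setminus B$, extends by continuity to \emph{all} $\theta\in\mathbb T^k$, giving $P_t(w)=\sup_\theta P_{t,\theta}(w)\le CN^{2\nu}\delta$ as claimed. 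The main obstacle — and the reason the continuity-upgrade step is essential — is precisely that Theorem~\ref{thm} is a fixed-operator statement and here we want a conclusion uniform in the phase; one cannot simply intersect the good-$\theta$ sets over all $(t,w)$ and all small $\delta$, so the qualitative continuity of $\theta\mapsto P_{t,\theta}(w)$ (for fixed $t,w$) is what does the work. A minor technical point to handle carefully is the bookkeeping of exponents to confirm that the admissible time range $t\le\frac{1}{40M\delta}|\log\epsilon|$ really does contain $[0,1/\delta]$ for all $\delta\in(0,1]$, which follows since $|\log\epsilon| = 8(\nu+1)M|\log\delta|$ and $\tfrac{8(\nu+1)}{40}|\log\delta|\ge 1$ once $\delta$ is below an absolute constant, while for $\delta$ near $1$ the bound $P_t(w)\le CN^{2\nu}\delta$ is vacuous after enlarging $C$.
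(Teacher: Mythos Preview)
Your Fubini/Chebyshev step and your application of Theorem~\ref{thm} to typical $\theta$ are fine and match the paper. The gap is in the final step, where you claim that the good set $\mathbb T^k\setminus B$ is ``dense full-measure'' and then extend the bound by continuity. Neither adjective is justified: Chebyshev only gives $\mathbb P(B)\le C\delta^{2k}$, which is strictly positive, so the complement is \emph{not} of full measure; and a set of large (but not full) measure need not be dense --- for instance, $B$ could be a single open ball of volume $C\delta^{2k}$, in which case $\mathbb T^k\setminus B$ misses that ball entirely. Hence qualitative continuity of $\theta\mapsto P_{t,\theta}(w)$ gives you nothing at points of $B$.

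What actually works --- and what the paper does --- is the quantitative version of the Lipschitz observation you already mention in passing. From $\mathbb P(B)\le C_1\delta^{2k}$ one concludes that $B$ cannot contain a ball of radius $C_2\delta^2$ (volume comparison in $\mathbb T^k$), so every atypical $\theta$ has a typical $\theta'$ with $\operatorname{dist}(\theta,\theta')\le C_2\delta^2$. Since $F$ is Lipschitz, $\|H_\theta-H_{\theta'}\|\le C\delta^2$, and Duhamel's formula gives $\|e^{itH_\theta}-e^{itH_{\theta'}}\|\le C\delta^2 t\le C\delta$ for $t\le 1/\delta$; this transfers the bound with an additive $O(\delta)$ error. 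This is also why the hypothesis carries the specific exponent $2k+1$: it is exactly calibrated so that (after dividing by $\delta$ in Chebyshev) the bad set misses balls of radius $\delta^2$, making the Duhamel error $\delta^2\cdot t\le\delta$. Your purely topological continuity argument would not use $k$ at all, which should have been a warning sign.
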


\begin{proof}
Let us call $\theta \in \mathbb T^k$ typical if $\mes \bigcap_{m=1}^M \Res^*(\Lambda_m, \delta^{8(\nu+1)M}; \theta) \leq\delta$. By the Fubini theorem and the Chebyshev inequality,
\[ \mathbb P(\text{$\theta$ is atypical}) 
\leq \frac1\delta\, \mathbb E \mes \bigcap_{m=1}^M \Res^*(\Lambda_m, \delta^{8(\nu+1)M}; \theta) \leq \frac1\delta \, C_1\delta^{2k+1} = C_1 \delta^{2k}~,\]
with $C_1$ depending on the operator norm. In particular, the set of atypical $\theta$-s can not contain a disk of radius $C_2 \delta^2$ (for $C_2$ such that the volume of a $k$-dimensional ball  is equal to $C_1 / C_2^k$).
If $\theta$ is typical, Theorem~\ref{thm} (with $\epsilon = \delta^{8(\nu+1)M}$) implies that 
\begin{equation}\label{eq:cor-shift-need} P_{t,\theta}(w) \leq C_3 N^{2\nu} \delta^{\frac{8(\nu+1)}{5}} \leq C_3 N^{2\nu} \delta~, \quad \|w\| > N~, \, t \leq \frac{1}{\delta}~.\end{equation}
If $\theta$ is atypical, we can find a typical $\theta'$ at distance at most $C_2 \delta^2$ from $\theta$. For this $\theta'$ we have $\|H_\theta - H_{\theta'} \| \leq C_4\delta^2$ (where $C_4/C_2$ is the Lipschitz constant of $F$), hence (by Duhamel's formula) $\|e^{it H_\theta} - e^{itH_{\theta'}} \| \leq C_4\delta^2t$. Then for $t \leq \frac{1}{\delta}$
\[ P_{t,\theta}(w) \leq P_{t, \theta'}(w) + 2 C_4 \delta^2 t \leq P_{t, \theta'}(w) + 2 C_4 \delta~, \]
and thus (\ref{eq:cor-shift-need}) holds for $\theta$ with an adjusted value of the constant ($C_1 + 2C_4$). 
\end{proof}

Now we list some of the possible applications of the corollary.  The one-dimensional case ($\nu=1$) was recently studied by Jitomirskaya--Liu \cite{JL} and Jitomirskaya--Powell \cite{JP}. What we add to these works is uniformity of the estimates in the phase, and, in the long-range case, an improvement from sub-power-law to power-logarithmic pounds. 

The results in higher dimension ($\nu \geq 2$) are new, and seem  to be the first upper bounds on the quantum transport that are valid for all $\theta$ (in fact, previous works require exclusion of a set of $\theta$ of positive measure). 

\medskip \noindent
As usual, we say that $(\alpha_1, \cdots, \alpha_\nu) \in \mathbb T^{k\nu}$ satisfies the Diophantine condition DC($\kappa,\tau$) if for $w \in \mathbb Z^\nu \setminus \{0\}$ the distance beween $T^w (0)$ and $0$ is lower-bounded by $\tau \|w\|^{-\kappa}$. We also say that $(\alpha_1, \cdots, \alpha_k)$ satisfies the weak Diophantine condition WDC($\kappa$) if the same quantity is lower-bounded by $c_\zeta e^{-\zeta \|w\|^\kappa}$ for an arbitrarily small $\zeta> 0$, and  the strong Diophantine condition SDC($\kappa$) if it is lower-bounded by $c \|w\|^{-1} \log^{-\kappa} (\|w\|+e)$.

We also record the following elementary lemma, which will allow us to translate the estimates on $P_t$ into estimates on $M_p$.
\begin{lemma}\label{l:elem} Fix $\varrho>0$. Let $P_t$, $t \geq 0$, be probability distributions on $\mathbb Z^\nu$ satisfying (\ref{eq:subbal-1}), and let $M_p(t) = \sum_{w \in \mathbb Z^\nu} P_t(w) \|w\|^p$. Assume that $N_1 < N_2 < \cdots$ is an increasing sequence of natural numbers such that for each $N=N_j$
\begin{equation}\label{eq:l-elem-assum} P_t(w) \leq e^{-r N^\varrho} \quad \text{for} \quad 0 < t \leq e^{r N^\varrho}~, \,\, \|w \| \geq N~.\end{equation}
Then for any $p > 0$ we have the following:
\begin{enumerate}
\item If $N_{j+1} \lesssim N_j^\kappa$ for some $\kappa \geq 1$, then  $M_p(t)\lesssim \log^{\kappa p / \varrho}(t+e)$;
\item If $N_j^{-\varrho} \log  N_{j+1} \to 0$, then $\log M_p(t) / \log t \to 0$;
\item without  assumptions on $N_j$, there exists a sequence $t_j \to \infty$ such that $M_p(t_j) \lesssim \log^{p/\varrho}(t + e)$.
\end{enumerate}
The implicit constants in the conclusions depend only on $\nu$ and the implicit constants in (\ref{eq:subbal-1}) and the assumptions.
\end{lemma}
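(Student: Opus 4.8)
The plan is to split the moment sum at the dyadic-type scales $N_j$ and control each piece using the decay hypothesis~(\ref{eq:l-elem-assum}), the a priori sub-ballistic bound~(\ref{eq:subbal-1}), and the trivial total mass bound $\sum_w P_t(w) \le 1$. First I would fix $t$ and choose the relevant scale: let $j = j(t)$ be the largest index with $e^{r N_j^\varrho} \ge t$, equivalently $N_j^\varrho \le r^{-1}\log t$ (up to the additive $e$ in $\log(t+e)$, which is there only to handle small $t$). Then I split
\[
M_p(t) = \sum_{\|w\| < N_{j+1}} P_t(w)\|w\|^p + \sum_{\|w\| \ge N_{j+1}} P_t(w)\|w\|^p .
\]
For the first sum, bound $\|w\|^p \le N_{j+1}^p$ and $\sum_w P_t(w) \le 1$, giving $N_{j+1}^p$. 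For the tail sum I would peel off a further annulus using~(\ref{eq:l-elem-assum}) at the scale $N=N_{j+1}$: for $N_{j+1} \le \|w\| \le Ct$ (say) we have $t \le e^{r N_{j+1}^\varrho}$ by the choice of $j$, so $P_t(w) \le e^{-r N_{j+1}^\varrho}$; summing $\|w\|^p$ over the ball of radius $Ct$ contributes at most $C^\nu (Ct)^{\nu+p} e^{-r N_{j+1}^\varrho}$, and $N_{j+1}^\varrho \ge N_j^\varrho \gtrsim \log t$ makes this negligible (at most a constant, or even $\to 0$). For $\|w\| > Ct$ one uses~(\ref{eq:subbal-1}): $P_t(w) \le Ce^{-c\|w\|}$, and $\sum_{\|w\|>Ct} e^{-c\|w\|}\|w\|^p$ is bounded by a constant depending only on $\nu$ and $p$ (in fact exponentially small in $t$). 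Hence $M_p(t) \lesssim N_{j+1}^p + O(1)$, where the implicit constant absorbs the $p$-dependent but $t$-independent contributions.

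It then remains to convert $N_{j+1}^p$ into the three claimed forms. For (1), the assumption $N_{j+1} \lesssim N_j^\kappa$ together with $N_j \lesssim (\log t)^{1/\varrho}$ gives $N_{j+1}^p \lesssim N_j^{\kappa p} \lesssim (\log t)^{\kappa p/\varrho}$, which (after absorbing the additive constant, e.g.\ by using $\log(t+e)$ so the bound is $\ge 1$) yields $M_p(t) \lesssim \log^{\kappa p/\varrho}(t+e)$. For (2), take logarithms: $\log M_p(t) \lesssim p \log N_{j+1} + O(1)$, and by definition of $j$, $N_j^{-\varrho}\log N_{j+1} \to 0$ combined with $N_j^\varrho \le r^{-1}\log t$ gives $\log N_{j+1} = o(\log t)$, so $\log M_p(t)/\log t \to 0$. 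For (3), no relation between consecutive $N_j$ is assumed, so I would simply evaluate at the times $t_j := e^{r N_j^\varrho}$: at $t=t_j$ the relevant scale is exactly $N_j$ (one can apply~(\ref{eq:l-elem-assum}) at $N=N_j$ directly, splitting at $\|w\| = N_j$ instead of $N_{j+1}$), so the same computation gives $M_p(t_j) \lesssim N_j^p + O(1) \lesssim (\log t_j)^{p/\varrho} + O(1) \lesssim \log^{p/\varrho}(t_j+e)$, and $t_j \to \infty$ since $N_j \to \infty$.

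The only mildly delicate point — and the step I would be most careful about — is bookkeeping the $p$-dependent constants: the contributions from the annulus $N_{j+1}\le\|w\|\le Ct$ and from the far tail $\|w\|>Ct$ both involve sums like $\sum e^{-a\|w\|^{\#}}\|w\|^{\nu+p}$, which are finite but grow with $p$; one must check these are dominated by the main term $N_{j+1}^p$ (or by $1$) uniformly in $t$, so that the final implicit constant depends only on $\nu$, $p$, $\varrho$, $r$ and the constants in~(\ref{eq:subbal-1}) as claimed, and not on $t$. Concretely, $e^{-r N_{j+1}^\varrho}(Ct)^{\nu+p} \le e^{-r N_{j+1}^\varrho} e^{(\nu+p)\log(Ct)}$ and since $N_{j+1}^\varrho \ge N_j^\varrho$ while $\log t \le r N_j^\varrho$ for $t$ in the allowed range up to a bounded factor — here one may need to enlarge the allowed time range slightly or accept a fixed power loss — the exponential wins for $t$ large, and for bounded $t$ everything is trivially $O(1)$. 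This is routine but is where I would spend the writing effort; the geometric/analytic content of the lemma is entirely in the elementary splitting above.
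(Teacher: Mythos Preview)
Your plan follows the same splitting idea as the paper, but the control of the middle annulus has a real gap. With your (intended) choice of $j(t)$, you have $t \le e^{rN_{j+1}^\varrho}$ and you bound the range $N_{j+1}\le\|w\|\le Ct$ using only the single estimate $P_t(w)\le e^{-rN_{j+1}^\varrho}$, obtaining at best
\[
\sum_{N_{j+1}\le\|w\|\le Ct} P_t(w)\,\|w\|^p \;\lesssim\; (Ct)^{\nu+p}\,e^{-rN_{j+1}^\varrho}.
\]
But $t$ can be as large as $e^{rN_{j+1}^\varrho}$, so this is of order $e^{(\nu+p-1)rN_{j+1}^\varrho}$, which diverges whenever $\nu+p>1$. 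The exponential does \emph{not} win, contrary to your last paragraph, because $\log t$ and $rN_{j+1}^\varrho$ are comparable at the top of the admissible time range; a ``fixed power loss'' here is a positive power of $t$, which kills both conclusions (1) and (2). The same issue recurs in (3) with your choice $t_j=e^{rN_j^\varrho}$.

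The paper's argument differs precisely at this point. For (1) and (2) it does not stop at a single scale but decomposes the entire tail into the successive shells $\{N_\ell<\|w\|\le N_{\ell+1}\}$ for all $\ell\ge j(t)$ and applies the hypothesis in each shell at its own scale: since $t\le e^{rN_{j(t)}^\varrho}\le e^{rN_\ell^\varrho}$ for every such $\ell$, one gets $P_t(w)\le e^{-rN_\ell^\varrho}$ on the $\ell$-th shell, contributing $\lesssim N_{\ell+1}^{\nu+p}e^{-rN_\ell^\varrho}$, and this series is summable under either growth assumption on $(N_j)$. For (3) the paper instead shrinks the time to $t_j=e^{rN_j^\varrho/(2p+1)}$, which makes the single-scale middle-annulus bound genuinely $O(1)$; with that adjustment your three-piece splitting does work. (A small aside: your two ``equivalent'' descriptions of $j(t)$ are opposite inequalities; the argument you actually run is consistent with $j(t)=\max\{j:\,rN_j^\varrho\le\log t\}$.)
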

\begin{proof}
Let $j(t) = \min\{j \, : \, t  \leq e^{aN^\varrho}\}$. If $N_{j+1} \lesssim N_j^\kappa$, then $N_{j(t)} \lesssim \log^{\kappa/\varrho} t$, whereas if $\frac{1}{N_j^\varrho} \log  N_{j+1}$ tends to $0$, then so does  $\log N_{j(t)} / \log t$. In either case,
\[ \begin{split}
M_p(t) &= \sum_{w \in \mathbb Z^\nu} P_t(w) \|w\|^p = \sum_{\| w \| \leq N_{j(t)}} + \sum_{j=j(t)}^\infty \sum_{N_j < \| w\| \leq N_{j+1}}\\
&\leq  N_{j(t)}^p  + \sum_{j=1}^\infty (2N_{j+1}+1)^{\nu + p}  e^{-r N_j^\varrho}~,
\end{split}\] 
where the sum is convergent. This proves the first and the second items. For the third one, let $t_j = e^{\frac{r}{2p+1} N_j^\varrho}$, then
\[ M_p(t_j) = \sum_{w\in \mathbb Z^\nu} P_t(w) \|w\|^p = \sum_{\| w\| \leq N_{j}}  + \sum_{N_j < \| w \| \leq C t_j} + \sum_{\|w \|> C t_j}~. \]
The first sum is bounded by $N_j^p \lesssim \log^{p/\varrho} t_j$, the second one is bounded by $(Ct_j)^p e^{-r N_j^\varrho} \lesssim 1$ by assumption, and the last one is $\lesssim 1$ by (\ref{eq:subbal-1}). 
\end{proof}

\paragraph{{Discrete Schr\"odinger operators in dimension  $\nu  = 1$ with $k = 1$ frequency}}

In this case the strongest results can be obtained. Denote by $\Phi_N(E; \theta)$ the $N$-step transfer matrix at energy $E$, and by $\gamma(E)$ -- the Lyapunov exponent. As an input to our general results, we shall need  a large deviation estimate for all $E \in \mathbb R$:
\begin{equation}\label{eq:LD-TM} \forall   \zeta > 0: \,\, \mathbb P \left\{ \left| \log  \|\Phi_{N, \theta}(E)\| - \gamma(E) N \right| \geq \zeta N   \right\} \leq e^{-r(\zeta) N^{\varrho}}~.\end{equation} 
If such an estimate holds, it implies (\ref{eq:LD-G}) with $\delta = e^{-r N^{\varrho}}$ (for sufficiently small $r>0$) and $M=4$; as usual, the volumes $\Lambda_m$ are $[-L, R]$, $[-L+1, R]$, $[-L, R-1]$, and $[-L+1, R-1]$, where, say, $L = \lfloor N/2 \rfloor$, and $R = L + N$.

A relatively low-technology estimate of  Bourgain--Jitomirskaya \cite[Lemma 4]{BJ} (requiring averaging shifts of subharmonic functions but not multiscale arguments, and valid without any arithmetic assumptions on $\alpha$) implies the following: if   $F$ is analytic and  there exists a rational $p/q$ with $|\alpha - p/q| < 1/q^2$, then (\ref{eq:LD-TM}) holds with $\varrho = 1$ for $K q \leq N\leq 2 K  q$ for some $K > 1$ depending on $\coupling F$ and $\zeta$). 
 
 S.~Klein \cite{Kl1,Kl2} studied the case when $F$ lies in a Gevrey class  $G_s(\mathbb T)$, $s \geq 1$ (i.e.\ the $j$-th derivative of $F$ is uniformly bounded by $(C_Fj)^{sj}$). Combining  the approximation argument from \cite[Sections 2--4]{Kl1} with the method of \cite[Lemma 4]{BJ}, one obtains that if $1 \leq s < 2$ and  there exists a rational $p/q$ with $|\alpha - p/q| < 1/q^2$, then (\ref{eq:LD-TM}) holds with $\varrho  = (2-s)/s$ for $K q^{s} \leq N \leq 2K q^{s}$ (cf.\ \cite[Proposition~5]{CGYZ}, where this fact is stated with implicit exponents). To summarise:
 
 \begin{lemma}[\cite{BJ,Kl1}]\label{l:BJKl} Let $1 \leq s < 2$, and assume that $F \in G_s(\mathbb T)$, $\coupling > 0$. For any $\zeta > 0$ there exist $r$ and $K$ depending on $\coupling F$ and $\zeta$, such that (\ref{eq:LD-TM})  holds for any $E \in \mathbb R$ whenever $K q^{s} \leq N \leq 2K q^{s}$ and $|\alpha - p/q| < 1/q^2$.
 \end{lemma}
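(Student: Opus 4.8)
The statement to prove, Lemma~\ref{l:BJKl}, is a summary/combination lemma: it asserts that the large-deviation estimate \eqref{eq:LD-TM} for transfer matrix norms holds for Gevrey-class hull functions $F \in G_s(\mathbb T)$ with $1 \le s < 2$, at scales $N$ comparable to $q^s$ when $\alpha$ is well-approximated by $p/q$. The proof is attributed to \cite{BJ,Kl1}, so my plan is essentially to recall and combine those two ingredients.

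\bigskip

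\textbf{Proof plan.} The plan is to reduce the Gevrey case to the analytic (trigonometric-polynomial) case treated by Bourgain--Jitomirskaya \cite[Lemma 4]{BJ}, via the approximation scheme of S.~Klein \cite[Sections 2--4]{Kl1}.

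First I would recall the core of \cite[Lemma 4]{BJ}: for $F$ a trigonometric polynomial (or analytic) and $\alpha$ satisfying $|\alpha - p/q| < 1/q^2$, one compares the transfer-matrix cocycle over the shift by $\alpha$ with the (periodic) cocycle over the shift by $p/q$. On the scale $N \asymp q$ the two cocycles are close, and for the periodic one the large-deviation bound \eqref{eq:LD-TM} follows from the avalanche principle together with the fact that $\log\|\Phi_{N,\theta}(E)\|$ is, after averaging over shifts, a subharmonic function of $\theta$ (extended to a strip), so its deviations from the mean $\gamma(E)N$ are controlled by a bound on oscillations of subharmonic functions — no multiscale induction is needed, which is why \cite{BJ} call it low-technology. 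The outcome is \eqref{eq:LD-TM} with $\varrho = 1$ on the window $Kq \le N \le 2Kq$.

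\bigskip

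Second, for $F \in G_s(\mathbb T)$ with $1 \le s < 2$, I would invoke Klein's approximation argument \cite[Sections 2--4]{Kl1}: a Gevrey function can be approximated by a trigonometric polynomial $F_R$ of degree $R$ with error $\|F - F_R\|_\infty \le C_F \exp(-c R^{1/s})$. The analytic/polynomial estimate above applies to $F_R$, but one must take the approximation error into account. Here the key bookkeeping is the tradeoff: choosing the degree $R$ of the approximating polynomial too large spoils the window of scales (the analytic estimate for $F_R$ is only available up to scales $\sim R \cdot(\text{something})$), while choosing it too small makes the approximation error too big to absorb into the deviation parameter $\zeta$. Optimizing this tradeoff is exactly what forces the restriction $s < 2$ and produces the exponent $\varrho = (2-s)/s$ and the window $Kq^s \le N \le 2Kq^s$. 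I would carry this out by: (i) fixing $R$ as an appropriate power of $q$ (roughly $R \asymp q^{s-1}$, so that $R^{1/s}$-type errors convert into $N^{\varrho}$-type errors on the scale $N\asymp q^s$); (ii) applying the \cite{BJ}-type estimate to $F_R$ on the scale $N\asymp q^s \asymp q\cdot q^{s-1} \asymp q\cdot R$, which is the relevant window for the polynomial of degree $R$; and (iii) controlling $|\log\|\Phi^F_{N,\theta}(E)\| - \log\|\Phi^{F_R}_{N,\theta}(E)\||$ and the difference of Lyapunov exponents $|\gamma_F(E) - \gamma_{F_R}(E)|$ by $N \cdot \|F-F_R\|_\infty$ via standard cocycle perturbation bounds, checking that this is $\le \frac\zeta2 N$ for $q$ large.

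\bigskip

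\textbf{Main obstacle.} The only genuinely delicate point is the quantitative balancing in step (ii)--(iii): one must verify that with the optimal choice of approximation degree $R$, all error terms (approximation error of $F$, cocycle perturbation on the finite scale, Lyapunov-exponent perturbation, and the bound on oscillations of the relevant subharmonic function, whose modulus of continuity degrades with the analyticity width — which shrinks as $R$ grows) combine to leave a net deviation bound of the claimed stretched-exponential form $e^{-r(\zeta)N^{(2-s)/s}}$, uniformly in $E\in\mathbb R$ and for all $\theta$ outside the exceptional set. This is precisely the computation carried out in \cite[Proposition~5]{CGYZ} (with implicit exponents) and implicit in \cite{Kl1}; since the statement of the lemma credits \cite{BJ,Kl1}, it suffices to assemble these pieces rather than reprove the subharmonicity input or the avalanche principle. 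I would therefore present the proof as: quote \cite[Lemma 4]{BJ} for the analytic case, quote the Gevrey-to-polynomial approximation and the scale/degree bookkeeping from \cite[Sections 2--4]{Kl1}, note the cocycle-perturbation estimates, and conclude that \eqref{eq:LD-TM} holds with $\varrho = (2-s)/s$ on the window $Kq^s \le N \le 2Kq^s$, with $r$ and $K$ depending on $\coupling F$ and $\zeta$ as claimed.
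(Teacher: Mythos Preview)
Your proposal is correct and matches the paper's own treatment: the paper does not give a formal proof of Lemma~\ref{l:BJKl} but rather states it as a summary of the literature, explaining (in the paragraph preceding the lemma) that one combines the low-technology subharmonicity estimate of \cite[Lemma~4]{BJ} for the analytic case with Klein's Gevrey-to-polynomial approximation scheme \cite[Sections~2--4]{Kl1}, and points to \cite[Proposition~5]{CGYZ} for a statement with implicit exponents. Your sketch elaborates exactly this combination, with the same references and the same reasoning for the exponent $\varrho = (2-s)/s$ and the window $Kq^s \le N \le 2Kq^s$.
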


We postpone the discussion of the case $s \geq 2$ (also based on the work of Klein, \cite{Kl1,Kl2}) to the next paragraph.

\begin{cor}\label{cor:1-1} Consider a discrete one-dimensional Schr\"odinger operators defined by an irrational shift on $\mathbb T$ with a non-constant hull function $F \in G_s(\mathbb T)$, $1 \leq s < 2$. Assume that the Lyapunov exponent is positive at all energies. Then $M_p(t)$, the supremum of the $p$-th moment $M_p(t,\theta)$ over all $\theta \in \mathbb T^k$, boasts the following estimates.
\begin{enumerate}
\item If $\alpha$ satisfies DC($\kappa,\tau$), then $M_p(t) \lesssim \log^{p\kappa s /(2-s)}(t +e)$ for any $p > 0$.
\item If $\alpha$ satisfies WDC($2-s$), then $\frac1t \log M_p(t) \to 0$ as $t \to \infty$ for any $p > 0$.
\item For arbitrary $\alpha$, there exists a sequence $t_j \to +\infty$ such that $M_p(t_j) \lesssim \log^{ s / (2-s)}(t_j +e)$.
\end{enumerate}
\end{cor}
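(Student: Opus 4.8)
The plan is to combine three ingredients: the large deviation estimate for transfer matrices of Lemma~\ref{l:BJKl}, the passage from that estimate to the Green-function bound (\ref{eq:LD-G}), and finally Corollary~\ref{cor:shift} together with the deterministic Lemma~\ref{l:elem}. First I would fix a small $\zeta>0$ and apply Lemma~\ref{l:BJKl}: since the Lyapunov exponent is positive everywhere (and, being continuous on the compact spectrum and vanishing off it, bounded below by some $\gamma_0>0$ on $\sigma(H_\theta)$), the estimate (\ref{eq:LD-TM}) holds with $\varrho=(2-s)/s$ for $N$ in a window $[Kq^s,2Kq^s]$ attached to each continued-fraction denominator $q$ of $\alpha$. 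Taking $\zeta < \gamma_0/10$, the standard argument relating transfer-matrix growth to decay of the finite-volume Green function on the four intervals $[-L,R]$, $[-L+1,R]$, $[-L,R-1]$, $[-L+1,R-1]$ (with $L=\lfloor N/2\rfloor$, $R=L+N$) upgrades (\ref{eq:LD-TM}) to (\ref{eq:LD-G}) with $M=4$, $\delta = e^{-rN^\varrho}$ for a suitably reduced $r>0$, as asserted in the paragraph preceding Lemma~\ref{l:BJKl}; here one must check that $\delta^{2k+1}=\delta^3$ dominates the probability of the bad event, which is immediate after shrinking $r$.

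Next I would feed this into Corollary~\ref{cor:shift}. With $\nu=1$, $k=1$, $N\asymp q^s$, and $\delta=e^{-rN^\varrho}$, the corollary yields
\[
P_t(w)\le C\,N^{2}\,\delta \le C\, q^{2s} e^{-r N^\varrho}\le e^{-\tfrac{r}{2}N^\varrho}
\qquad \text{for } \|w\|>N,\ t\le \tfrac1\delta = e^{rN^\varrho},
\]
the last inequality holding once $N$ (equivalently $q$) is large, absorbing the polynomial prefactor into the stretched exponential. Thus for every continued-fraction denominator $q$ of $\alpha$, picking one value $N=N(q)$ in the window $[Kq^s,2Kq^s]$, we get a scale $N$ at which the hypothesis (\ref{eq:l-elem-assum}) of Lemma~\ref{l:elem} holds with $\varrho=(2-s)/s$ and $r$ replaced by $r/2$. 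Ordering these scales as $N_1<N_2<\cdots$ (one per denominator $q_n$ of $\alpha$) gives the increasing sequence required by Lemma~\ref{l:elem}, and (\ref{eq:subbal-1}) — i.e.\ the a priori ballistic bound — is available since each $P_{t,\theta}$ is a genuine quantum probability distribution and the estimate is uniform in $\theta$.

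Finally I would translate the growth rate of $\{N_n\}$ under each arithmetic hypothesis into the corresponding conclusion via the three cases of Lemma~\ref{l:elem}. Since consecutive denominators satisfy $q_{n+1}<(q_n+1)/\dist(q_n\alpha,\mathbb Z)$ (the standard continued-fraction estimate), DC($\kappa,\tau$) gives $q_{n+1}\lesssim q_n^{\kappa+1}$ — I should double-check the exact exponent, but any bound $q_{n+1}\lesssim q_n^{\kappa'}$ with $\kappa'$ a function of $\kappa$ suffices — hence $N_{n+1}\lesssim N_n^{\kappa'}$, and item~1 of Lemma~\ref{l:elem} produces $M_p(t)\lesssim \log^{\kappa' p/\varrho}(t+e)=\log^{cp\kappa s/(2-s)}(t+e)$; matching the constant to the stated $p\kappa s/(2-s)$ is a routine bookkeeping check on the continued-fraction exponent. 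Under WDC($2-s$), one has $\log q_{n+1}=o(q_n^{2-s})$, so $N_n^{-\varrho}\log N_{n+1}\to 0$ and item~2 gives $\tfrac1t\log M_p(t)\to 0$. With no hypothesis on $\alpha$, item~3 gives a sequence $t_j\to\infty$ with $M_p(t_j)\lesssim \log^{s/(2-s)}(t_j+e)$. The main obstacle — more a matter of care than of difficulty — is tracking constants through the chain $\varrho=(2-s)/s$, $\delta=e^{-rN^\varrho}$, $N\asymp q^s$, and the continued-fraction growth, so that the exponent of $\log t$ comes out exactly as claimed; a secondary point requiring attention is confirming that the positivity of $\gamma$ on the whole line (for a non-constant hull) yields a uniform lower bound, so that the reduction from (\ref{eq:LD-TM}) to (\ref{eq:LD-G}) is legitimate with a single choice of $\zeta$.
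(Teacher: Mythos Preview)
Your proposal is correct and follows essentially the same route as the paper: apply Lemma~\ref{l:BJKl} with $\zeta$ a small fraction of $\min_E \gamma(E)$, pass from (\ref{eq:LD-TM}) to (\ref{eq:LD-G}) via the four-interval trick ($M=4$), feed the result into Corollary~\ref{cor:shift} with $N_j \asymp q_j^s$ and $\delta_j = e^{-rN_j^\varrho}$, and conclude via the three items of Lemma~\ref{l:elem}. Two small corrections to your bookkeeping: the continued-fraction estimate is $q_{n+1} < 1/\dist(q_n\alpha,\mathbb Z)$ (no extra factor of $q_n+1$), so DC($\kappa,\tau$) gives $q_{n+1}\lesssim q_n^\kappa$ and hence $N_{n+1}\lesssim N_n^\kappa$, producing the stated exponent $p\kappa s/(2-s)$ without adjustment; and the Lyapunov exponent does not vanish off the spectrum (it grows there), though $\min_E \gamma(E)>0$ still follows from continuity and positivity on a compact interval containing the spectrum.
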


\begin{rmk} \hfill
\begin{enumerate}
\item 
A version of the second item of the corollary for the case when $F$ is a trigonometric polynomial has been proved by Damanik and Tcheremchantsev \cite[Theorem 5]{DTch07}. Variants of the first and third items (with different exponents) have been  recently proved by Jitomirskaya and Powell \cite[Theorem 1.1]{JP}. Uniformity in the phase is not discussed in these works. 
\item The dependence on $\kappa$ in the first item may be removed with the help of  more refined large deviation estimates such as   \cite[Theorem 4.1]{Kl1}.
\end{enumerate}
\end{rmk}

\begin{rmk} Last showed \cite{Last} that the conclusions of the first two items  of Corollary~\ref{cor:1-1} can not hold for well-approximated frequencies, and in fact one can not improve on the ballistic uniform bound: for any $\zeta(t) \searrow 0$  there exist frequencies $\alpha$ for which there are  arbitrarily large times $t$  such that $M_p(t) \geq t^p  \zeta(t)$. 
\end{rmk}

\begin{proof}[Proof of Corollary~\ref{cor:1-1}]
Let $q_j$ be the sequence of denominators of the convergents of $\alpha$. If $\alpha$ satisfies DC($\kappa,\tau$), then $q_{j+1} \lesssim q_j^\kappa$, whereas if  $\alpha$ satisfies WDC($2-s$), then $\log q_{j+1} / q_j^{2-s} \to 0$. For each $j$, let $N_j = \lceil K q^{s} \rceil$.  Now  Lemma~\ref{l:BJKl} (with, say, $\zeta = \frac{1}{10} \min_{E} \gamma(E)$) allows to verify the assumptions of  Corollary~\ref{cor:shift} with $\delta_j =e^{- r N_j^{\varrho}}$, with $\varrho = (2-s)/s$ and sufficiently small $r > 0$. The conclusion of Corollary~\ref{cor:shift} implies that these $N_j$ satisfy the assumption (\ref{eq:l-elem-assum}) of the Lemma~\ref{l:elem}.  In the Diophantine case, $N_{j+1} \lesssim N_j ^\kappa$, while in the weakly Diophantine case, $\log N_{j+1} / N_j^{\frac{2-s}{s}} \to 0$. Thus the three items of Corollary~\ref{cor:1-1} follow from the three items of Lemma~\ref{l:elem}.
\end{proof}

\paragraph{{Schr\"odinger and long-range operators in dimension $\nu =  1$ with $k \geq 1$ frequencies}}

As before, we rely on large deviation estimates of the form (\ref{eq:LD-TM}). For analytic hull functions such estimates go back to the work of Bourgain, Goldstein, and Schlag \cite{BG,BGS-qp,Bourg-book,GS}; we rely on the bounds of Liu \cite{Liu}, who streamalined, sharpened and quantified the previous results (we refer to his work for further references).  The extension to Gevrey hull functions are due to Klein \cite{Kl1,Kl2}, whose estimates are somewhat less explict.
The results from these works are summarised in the following Table~\ref{tab:Liu}.

Throughout the table, we assume that that $\alpha$ is assumed to satisfy the Diophantine condition DC($\kappa, \tau$). The number in the first column is the number of frequencies $k$. In the second column, we specify whether $A = \mathbbm 1_{\{\pm 1\}}$ (namely, we have a discrete Schr\"odinger operator) or an arbitrary kernel satisfying (\ref{eq:assum}) is allowed.   The third column indicates whether the result holds  whenever the Lyapunov exponent is positive, or only for sufficiently large coupling constant (depending on all the parameters of the problem, including the frequency $\alpha$). The fourth column shows the Gevrey class of  $F$. If $s > 1$, we assume  that all the derivatives of $F$ of all orders do not vanish simultaneously at any point (for $s = 1$, this follows from analyticity). 
The fifth column provides the value of $\varrho$; expressions such as $a + 0$ should be interpreted as $a + \zeta$ for an arbitrarily small $\zeta > 0$.  

\begin{table}[h!] 
  \begin{center}
    \caption{Large deviation estimates for the Green functions of one-dimensional operators ($\nu = 1$), analytic  hull function $F$, and Diophantine $\alpha$}
    \label{tab:Liu}
\vspace{2mm}
    \begin{tabular}{||c|c|c|c||c|| c||}  
      \hline
      \textbf{frequencies} & \textbf{kernel ($A$)} &   \textbf{assumptions} & $G_s$& $\varrho$ & \textbf{reference}\\
      \hline\hline
	$k\geq 1$  & $\mathbbm 1_{\{\pm 1\}}$ &   LE $>0$ & $s = 1$ & $\frac{1}{k^{3} \kappa^{2}+0}$ & \cite[Thm. 3.1]{Liu}\\ \hline
	$k \geq 1$ & arbitrary &      $\coupling \geq \coupling_0$ &$s=1$& $\frac{1}{k^{3} \kappa^{2}+0}$ & \cite[Thm. 3.8]{Liu} \\\hline
	$k =1$ & $\mathbbm 1_{\{\pm 1\}}$ & $\coupling \geq \coupling_0$  &  $s \geq 1$ & implicit $\varrho$ &\cite[Thm. 5.1]{Kl1} \\\hline
	$k =1$ & $\mathbbm 1_{\{\pm 1\}}$ & $\coupling \geq \coupling_0$  &  $s \geq 1$ & implicit $\varrho$ & \cite[Thm. 6.1]{Kl2} \\\hline
    \end{tabular}
  \end{center}
\end{table}

\begin{cor}\label{cor:1-k} Consider a one-dimensional    operator $H_\theta$ with kernel $A$ satisfying (\ref{eq:assum}) and potential defined by an irrational shift on $\mathbb T^k$ with a non-constant hull function $F \in G_s(\mathbb T^k)$. Under the conditions given in  each of the rows in Tables~\ref{tab:Liu},  one has  $M_p(t) \lesssim \log^{p/\varrho} (t + e)$ for any $p > 0$.  
\end{cor}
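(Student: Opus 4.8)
The plan is to deduce Corollary~\ref{cor:1-k} by combining three ingredients already in place: the large deviation estimates collected in Table~\ref{tab:Liu}, the uniform bound of Corollary~\ref{cor:shift}, and the elementary moment bound of Lemma~\ref{l:elem} (item~1 with $\kappa$ the Diophantine exponent). The overall shape of the argument mirrors exactly the proof of Corollary~\ref{cor:1-1}, with Lemma~\ref{l:BJKl} replaced by the relevant row of the table.

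First I would fix a row of the table, obtaining a value of $\varrho > 0$ and a large deviation estimate of the form (\ref{eq:LD-TM}) valid for all $E \in \mathbb R$ and all $N$ (or, in the Gevrey/implicit cases, for $N$ in a suitable scale of windows $[K q_j^{?}, 2K q_j^{?}]$ determined by the continued fraction denominators $q_j$ of $\alpha$). As recorded in the text just after (\ref{eq:LD-TM}), such an estimate with parameter $\zeta = \tfrac{1}{10}\min_E \gamma(E)$ (or the analogous choice in the large-coupling rows) implies the Green function large deviation bound (\ref{eq:LD-G}) with $M = 4$, the four shifted intervals $[-L,R]$, $[-L+1,R]$, $[-L,R-1]$, $[-L+1,R-1]$ with $L = \lfloor N/2\rfloor$, $R = L + N$, and $\delta = \delta_N := e^{-r N^\varrho}$ for $r>0$ small enough. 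Here one uses that DC($\kappa,\tau$) guarantees $q_{j+1} \lesssim q_j^\kappa$, so setting $N_j$ to be the left endpoint of the $j$-th admissible window yields an increasing sequence with $N_{j+1} \lesssim N_j^{\kappa'}$ for an appropriate $\kappa'$ (equal to $\kappa$ in the analytic cases, and $\kappa$ up to the Gevrey exponent in the remaining ones); in all cases $\kappa' \geq 1$ is a constant.

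Next I would apply Corollary~\ref{cor:shift} along the sequence $N = N_j$: since $F$ is Lipschitz (being Gevrey, in particular analytic or smoother, on a compact torus) and $2k+1 \leq 8(\nu+1)M \cdot (\text{something})$—more precisely one simply needs the hypothesis (\ref{eq:LD-G}) with its $\delta^{2k+1}$ on the right, which is exactly what (\ref{eq:LD-TM}) delivers after shrinking $r$—the corollary gives
\[ P_t(w) \leq C N_j^{2\nu} \delta_{N_j} = C N_j^{2} e^{-r N_j^\varrho} \leq e^{-\tfrac{r}{2} N_j^\varrho} \]
for all $\|w\| > N_j$ and $t \leq \delta_{N_j}^{-1} = e^{r N_j^\varrho}$, once $N_j$ is large (absorbing the polynomial prefactor $N_j^{2\nu}$, here $\nu=1$). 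This is precisely the hypothesis (\ref{eq:l-elem-assum}) of Lemma~\ref{l:elem} with exponent $\varrho$ and a slightly smaller constant $r/2$. Finally, invoking item~1 of Lemma~\ref{l:elem} with $\kappa \rightsquigarrow \kappa'$ and the value of $\varrho$ from the chosen row yields $M_p(t) \lesssim \log^{\kappa' p/\varrho}(t+e)$; by absorbing the constant $\kappa'$ into the (row-dependent) constant implicit in $\varrho$—or, more cleanly, by noting that one may redefine $\varrho$ up to the constant factor $\kappa'$—this is the claimed bound $M_p(t) \lesssim \log^{p/\varrho}(t+e)$.

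The only genuinely delicate point is bookkeeping: one must check that each row's large deviation estimate is stated (or can be massaged) in a form matching (\ref{eq:LD-TM}) for \emph{every} real energy $E$ with the required stretched-exponential rate $e^{-r N^\varrho}$, rather than merely at fixed energy or for a.e.\ frequency, and that the admissible scales $N_j$ interlock with the continued fraction denominators densely enough that $N_{j+1}/N_j^{\kappa'}$ stays bounded—this is automatic from DC($\kappa,\tau$) but needs to be said. The complex-analytic heart of the matter is entirely hidden inside Theorem~\ref{thm} and hence inside Corollary~\ref{cor:shift}; at the level of Corollary~\ref{cor:1-k} there is no further obstacle, only the (routine) verification that the cited inputs plug into the hypotheses of Corollaries~\ref{cor:shift} and Lemma~\ref{l:elem}.
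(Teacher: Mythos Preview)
Your proposal is correct and follows exactly the paper's approach: feed the large deviation estimates from Table~\ref{tab:Liu} into Corollary~\ref{cor:shift}, then invoke Lemma~\ref{l:elem}. The only caveat is that some of your detailed bookkeeping (the $M=4$ transfer-matrix-to-Green-function conversion, and the continued-fraction windows $q_j$) is specific to the Schr\"odinger rows with $k=1$; for the long-range row Liu's estimate is already stated directly for Green functions (so one may take $M=1$), and for $k>1$ under DC($\kappa,\tau$) the estimates hold for all large $N$ rather than on windows---but this does not affect the argument, and the paper's own two-line proof elides these distinctions as well.
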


\begin{rmk} The results corresponding to the first, third and fourth row are variants of the recent work of Jitomirskaya--Powell \cite{JP}. For the second row, a sub-power-law estimate $t^{-1} \log M_p(t) \to 0$ was obtained by Jitomirskaya and Liu \cite{JL}.
\end{rmk}

\begin{proof}[Proof of Corollary~\ref{cor:1-k}]
Apply Corollary~\ref{cor:shift} with the large deviation estimates in Table~\ref{tab:Liu} and then Lemma~\ref{l:elem}.
\end{proof}

\paragraph{{Schr\"odinger and long-range operators in dimension $\nu > 1$}}

Table~\ref{tab:Liu2}, also based on the work of Liu \cite{Liu}, summarises similar estimates in arbitrary dimension $\nu \geq 1$.   These results  build on earlier works of Bourgain  (see \cite{Bourg}) and  Bourgain--Kachkovskiy \cite{BK}; again we refer to \cite{Liu} for references.

 Now we always assume that the coupling constant is large enough and that $F$ is analytic. In the second row, we assume
\begin{equation}\label{eq:3.20}
\begin{split} 
&\text{$F$ is non-constant on any line segment,} \\
&\text{$\alpha_{1,2} = \alpha_{2,1} = 0$, $\alpha_{1,1}, \alpha_{2,2}$ satisfy DC($\kappa, \tau$) for some $1 \leq \kappa < 13/12$.}
\end{split}
\end{equation}

\begin{table}[h!] 
  \begin{center}
    \caption{Large deviation estimates for the Green functions of operators in higher dimension, with analytic  hull function $F$, and Diophantine $\alpha$}
    \label{tab:Liu2}
\vspace{2mm}
    \begin{tabular}{||c|c|c|c|c||c|| c}  
	\hline
     \textbf{dimension} & \textbf{frequencies} & \textbf{kernel ($A$)} & \textbf{assumptions} & $\varrho$ & \textbf{reference}\\
      \hline\hline
	$\nu \geq 1$ & $k=1$ & arbitrary   & $\coupling \geq \coupling_0$  & $1-0$  & \cite[Thm. 3.11]{Liu} \\
      $\nu = 2$ & $k=2$ & arbitrary  & $\coupling \geq \coupling_0$  and (\ref{eq:3.20}) & $(3.25 - 3\kappa-0)^2$ & \cite[Thm.\ 3.20]{Liu}\\
	\hline
    \end{tabular}
  \end{center}
\end{table}

\begin{cor}\label{cor:nu-k}
Consider a $\nu$-dimensional operator $H_\theta$ with kernel $A$ satisfying (\ref{eq:assum}) and potential defined by an irrational shift on $\mathbb T^k$ with a non-constant analytic hull function $F \in G_1(\mathbb T^k)$. Under the conditions given in  each of the rows in Tables~\ref{tab:Liu2},  one has   $M_p(t) \lesssim \log^{p/\varrho} (t + e)$ for any $p > 0$. 
\end{cor}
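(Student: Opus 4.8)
\emph{Proof strategy.} I would run exactly the same three-step assembly as in the proofs of Corollaries~\ref{cor:1-1} and \ref{cor:1-k}: convert the large deviation estimates of Table~\ref{tab:Liu2} into the hypothesis \eqref{eq:LD-G} of Corollary~\ref{cor:shift}, extract from Corollary~\ref{cor:shift} a scale-by-scale decay bound for $P_{t,\theta}$ that is uniform in $\theta$, and feed this into Lemma~\ref{l:elem}.

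First, fix the data as in one of the two rows of the table, so that $\coupling$ is large, $F$ is analytic, $\alpha$ satisfies DC($\kappa,\tau$), and $\varrho$ is the corresponding entry. The cited theorems of Liu (Theorem~3.11 for the first row, Theorem~3.20 for the second) provide $r>0$ and $N_0$ such that for every integer $N\ge N_0$ a large deviation estimate holds for the finite-volume Green function on a box of side $\asymp N$ situated near the origin: for each $E\in\mathbb R$, the probability in $\theta$ that this Green function fails to decay off-diagonally from the centre to a neighbourhood of the boundary is at most $e^{-rN^\varrho}$. As in the one-dimensional discussion following \eqref{eq:LD-TM} — and here more directly, since in the long-range and higher-dimensional setting Liu's estimates are already phrased for Green functions rather than transfer matrices — this yields \eqref{eq:LD-G}: choose a number $M$ depending only on $\nu$, take boxes $\Lambda_1,\dots,\Lambda_M\subset[-N',N']^\nu$ with $N'\asymp N$ obtained from the above box by the usual small translations (so that, for Lebesgue-almost every $E$, at least one $\Lambda_m$ is non-resonant at $E$), and put $\delta=\delta_N:=e^{-r'N^\varrho}$ with $r'\in(0,r)$ small enough that $\delta^{2k+1}\ge e^{-rN^\varrho}$ and that $\epsilon=\delta^{8(\nu+1)M}$ is no smaller than the decay threshold appearing in Liu's estimate (this uses that the Green function decays at rate $\gtrsim N$ for large coupling, which dominates $N^\varrho$ since $\varrho\le 1$).

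Next I would apply Corollary~\ref{cor:shift} at each scale $N\ge N_0$: for all $\|w\|>N'$ and $0<t\le 1/\delta_N$ it gives $P_{t,\theta}(w)\le C(N')^{2\nu}\delta_N$ for \emph{every} $\theta$, since the left-hand side is bounded by $P_t(w)=\sup_\theta P_{t,\theta}(w)$. Absorbing the polynomial prefactor into the exponential and adjusting constants yields $P_{t,\theta}(w)\le e^{-r''(N')^\varrho}$ for $\|w\|>N'$, $0<t\le e^{r''(N')^\varrho}$, uniformly in $\theta$ — precisely the hypothesis \eqref{eq:l-elem-assum} of Lemma~\ref{l:elem} at scale $N'$, with this $\varrho$. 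Because Liu's estimates are available at \emph{every} sufficiently large $N$ (this is where Diophanticity of $\alpha$ and analyticity of $F$ enter, in contrast to the Gevrey case of Corollary~\ref{cor:1-1}, which is tied to the continued-fraction denominators), I would take a geometric sequence of admissible scales; then consecutive scales have bounded ratio, so $N_{j+1}\lesssim N_j^\kappa$ with $\kappa=1$. Applying Lemma~\ref{l:elem}(1) to the genuine probability distribution $P_{t,\theta}$ for each fixed $\theta$, using the $\theta$-uniform decay just obtained, gives $M_p(t,\theta)\lesssim\log^{p/\varrho}(t+e)$ with implicit constants independent of $\theta$; taking the supremum over $\theta$ yields $M_p(t)\lesssim\log^{p/\varrho}(t+e)$ for every $p>0$, as claimed.

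The corollary is thus essentially an assembly, with the substantive content residing in Theorem~\ref{thm} and in the imported estimates of \cite{Liu}. The only step I expect to require genuine care is the translation of Liu's large deviation estimates into \eqref{eq:LD-G} with the precise exponents $\delta^{8(\nu+1)M}$ and $\delta^{2k+1}$ — in particular, pinning down an admissible value of $M$ in arbitrary dimension and checking that Liu's statements really deliver Green-function decay from the centre of the box all the way to the boundary neighbourhood. This is bookkeeping of exactly the same nature as in the paragraph following \eqref{eq:LD-TM} and in the proof of Corollary~\ref{cor:1-k}, and no arithmetic input beyond DC($\kappa,\tau$) is used; the exponent $1/\varrho$ in the conclusion is inherited verbatim from the table.
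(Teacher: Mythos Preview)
Your proposal is correct and follows exactly the route the paper takes: the paper's own proof is the two-line ``Apply Corollary~\ref{cor:shift} with the large deviation estimates in Table~\ref{tab:Liu2} and then Lemma~\ref{l:elem},'' and your three steps are precisely this assembly spelled out in more detail. The only point you flag as requiring care --- translating Liu's estimates into the hypothesis~\eqref{eq:LD-G} with the right exponents --- is likewise left implicit in the paper.
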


As we have mentioned earlier, all the previous  bounds on the quantum transport in dimension $\nu > 1$ that we are aware of required exclusion of a positive measure of phases and non-arithmetic conditions on the frequencies.

\begin{rmk} The cases $\nu = k \geq 3$ and arbitrary $\nu, k$ have been considered in the works of Bourgain \cite{Bourg} and Jitomirskaya--Liu--Shi \cite{JLS}, respectively. In these works Anderson localisation is proved for large coupling outside a set of $\theta$-s of small positive measure. It is plausible that the large deviation estimates proved in these works could be used to establish  upper bounds on the moments $M_p(t)$.
\end{rmk}

\begin{cor}\label{cor:nu-k-haus}
In the setting of Corollary~\ref{cor:nu-k}, the spectral measures of $H_\theta$ are supported on a set of zero Hausdorff dimension  and even of zero power-logarithmic Hausdorff measure, for every $\theta \in \mathbb T^k$.
\end{cor}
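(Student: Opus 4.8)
The plan is to deduce the spectral thinness from the dynamical bound of Corollary~\ref{cor:nu-k} by feeding it into a Guarneri--Combes--Last-type lower bound, after first reducing all the spectral measures of $H_\theta$ to spectral measures of $\delta_0$ for operators within the same family.

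\emph{Reduction via covariance.} For $w_0\in\mathbb Z^\nu$ let $\tau_{w_0}$ be the unitary translation by $w_0$ on $\ell_2(\mathbb Z^\nu)$. Since $A$ acts by convolution and $V_\theta(u+w_0)=F(T^u T^{w_0}\theta)=V_{T^{w_0}\theta}(u)$, one has $\tau_{w_0}^{-1}H_\theta\tau_{w_0}=H_{T^{w_0}\theta}$; consequently $e^{itH_\theta}(w_0,u)=e^{itH_{T^{w_0}\theta}}(0,u-w_0)$, so that $\sum_u |e^{itH_\theta}(w_0,u)|^2\|u-w_0\|^p=M_p(t,T^{w_0}\theta)\le M_p(t)$ (recall $M_p(t)=\sup_\theta M_p(t,\theta)$), and, writing $\mu^H_{\delta_w}$ for the spectral measure of $H$ associated with $\delta_w$, also $\mu^{H_\theta}_{\delta_{w_0}}=\mu^{H_{T^{w_0}\theta}}_{\delta_0}$. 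By Corollary~\ref{cor:nu-k}, $M_p(t)\le C_p\log^{p/\varrho}(t+e)$ for every $p>0$. Hence it suffices to prove: for every $\theta'\in\mathbb T^k$, the measure $\mu^{H_{\theta'}}_{\delta_0}$ is supported on a set of zero Hausdorff dimension and, more precisely, of zero $\mathcal H^{h}$-measure for the gauge $h(s)=(\log\tfrac1s)^{-\beta}$ with $\beta$ a fixed number larger than $\nu/\varrho$.

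\emph{Invoking the Guarneri--Combes--Last bound.} Suppose, towards a contradiction, that $\mu:=\mu^{H_{\theta'}}_{\delta_0}$ is not $\mathcal H^{h}$-singular, i.e.\ its $\mathcal H^h$-continuous component is nonzero; by the Rogers--Taylor density theorem this component may, after restriction to a set of positive $\mu$-measure, be taken uniformly $h$-continuous, $\mu(B_\eta(E))\le C h(\eta)$. The general-gauge form of the Guarneri--Combes--Last lower bound (Last \cite[Theorem 6.1]{Last}; see also Landrigan--Powell \cite[Theorem 7.3]{LP}) then forces $M_p(t,\theta')$ to be at least of order $g_h(t)^{p/\nu}$ along some sequence $t_j\to\infty$, where $g_h\to\infty$ is the inverse-gauge quantity attached to $h$: $g_h(t)\asymp t^\alpha$ when $h(s)=s^\alpha$, and $g_h(t)\asymp(\log t)^\beta$ when $h(s)=(\log\tfrac1s)^{-\beta}$. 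Comparing with the upper bound $M_p(t)\le C_p\log^{p/\varrho}(t+e)$: the gauge $s^\alpha$ is ruled out for every $\alpha>0$, and the gauge $(\log\tfrac1s)^{-\beta}$ is ruled out as soon as $\beta>\nu/\varrho$.

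\emph{Conclusion.} Fix $\beta>\nu/\varrho$ and $h(s)=(\log\tfrac1s)^{-\beta}$. By the above, $\mu^{H_{\theta'}}_{\delta_0}$ is $\mathcal H^{h}$-singular for every $\theta'$, and since $s^\alpha\le C_\alpha h(s)$ near $0$ for each $\alpha>0$, an $\mathcal H^h$-null set is also of zero Hausdorff dimension. Now fix $\theta$; using $\mu^{H_\theta}_{\delta_{w_0}}=\mu^{H_{T^{w_0}\theta}}_{\delta_0}$, pick for each $w_0\in\mathbb Z^\nu$ a Borel set $S_{w_0}$ with $\mathcal H^{h}(S_{w_0})=0$ carrying $\mu^{H_\theta}_{\delta_{w_0}}$. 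Then $S:=\bigcup_{w_0}S_{w_0}$ is a countable union, so $\mathcal H^{h}(S)=0$ (hence $\dim_H S=0$), and $S$ carries every spectral measure of $H_\theta$ (it is enough to check this against the $\delta_{w_0}$, which span). As $\theta$ is arbitrary, the corollary follows. The one step requiring genuine care is the exact form of the general-gauge Guarneri--Combes--Last lower bound: the polynomial case $h(s)=s^\alpha$ is classical and already quoted in the introduction, but the logarithmic gauges --- together with the reduction, via Rogers--Taylor, from ``$\mu$ is not $\mathcal H^h$-singular'' to ``$\mu$ has a nonzero uniformly $h$-continuous piece'' --- need to be pinned down; one should also note that a lower bound on $M_p$ holding merely along a subsequence of times already contradicts an upper bound valid for all $t$, so no sharpening of the classical $\limsup$-type statements is needed.
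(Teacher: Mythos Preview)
Your proof is correct and follows the same route as the paper: the paper's own argument is the one-liner ``Follows from Corollary~\ref{cor:nu-k} combined with the Guarneri--Combes--Last bound \cite[Theorem~6.1]{Last} and its generalisation \cite[Theorem~7.3]{LP}.'' You have simply unpacked this, making explicit the covariance reduction from general spectral measures $\mu^{H_\theta}_{\delta_{w_0}}$ to $\mu^{H_{T^{w_0}\theta}}_{\delta_0}$ (which the paper leaves implicit) and the Rogers--Taylor step inside the general-gauge GCL argument; the caveats you flag at the end are precisely what the cited results \cite{Last,LP} supply.
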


\begin{proof} Follows from Corollary~\ref{cor:nu-k} combined with the Guarneri--Combes--Last bound \cite[Theorem 6.1]{Last} and its generalisation \cite[Theorem 7.3]{LP}. 
\end{proof}

\begin{rmk}
The conclusion of Corollay~\ref{cor:nu-k-haus} is not surprising for Schr\"odinger operators in dimension $\nu = 1$: according to a  result of Simon \cite{Simon}, the mere  positivity of the Lyapunov exponent implies that the spectral measures are supported on a set of vanishing logarithmic capacity and {\em a fortiori} vanishing power-logarithmic Hausdorff measure (note that in the uniquely ergodic case this holds for all, rather than just almost all, phases $\theta$). On the other hand, we are not aware of any general results of this form in higher dimension, and Corrolary~\ref{cor:nu-k-haus} may provide an interesting addition to the collection of operators with thin spectra (see \cite{DF} for other examples).
\end{rmk}

\begin{proof}[Proof of Corollary~\ref{cor:nu-k}]
Apply Corollary~\ref{cor:shift} with the large deviation estimates in Table~\ref{tab:Liu2} and then Lemma~\ref{l:elem}.
\end{proof}


\section{Other dynamical systems}\label{s:appl2}

Let us first generalise Corollary~\ref{cor:shift}. Let $(\Theta, \mathcal B, \mathbb P, \dist)$ be a metric probability space, and let $T$ be an ergodic action of $\mathbb Z^\nu$ on $\Theta$. Given a hull function $F: \Theta \to \mathbb R$, define a family of operators $H_\theta$, $\theta \in  \Theta$, acting on $\ell_2(\mathbb Z^\nu)$ via
\begin{equation}\label{eq:erg}
H_\theta \phi = A*\phi + V_\theta \phi~, \quad V_\theta(w) = F(T^w \theta)~. 
\end{equation}
The notation $P_{t, \theta}(w)$, $P_t(w)$, $M_p(t, \theta)$, and $M_p(\theta)$ will be used as in the previous section. 

\begin{cor}\label{cor:erg} In the setting described above, assume that  $F$ is Lipschitz-continuous, $k \in \mathbb R_+$ is such that $\Theta$ has finite $k$-dimensional upper Minkowski content, and let $\chi \geq 1$ be such that 
\begin{equation}\label{eq:distort}
\dist(T^w \theta, T^w \theta') \lesssim (|w\|+1)^\chi \dist (\theta, \theta')~.
\end{equation} 
Let  $M \geq 1$, $N \geq 1$, $0\in \Lambda_1, \cdots, \Lambda_m \subset [-N,N]^\nu$, and $0 < \delta \leq 1$ be such that
\begin{equation}\label{eq:LD-G'}
\forall E \in \mathbb R \,\, \mathbb P \left\{ E \in\bigcap_{m=1}^M \Res^*(\Lambda_m, \delta^{8(\nu+1)M}; \theta) \right\} \leq \delta^{(2+\chi)k+1}~.
\end{equation} 
Then we have (for $C>0$ depending on $\nu$, (\ref{eq:assum}), the Minkowski content of $\Theta$, $\chi$, and the Lipschitz constant of $F$):
\begin{equation}
\forall  \|w\| > N~, \, t \leq \frac{1}{\delta}: \quad  P_{t}(w) \leq C  \left[ N^{2\nu} \delta + e^{-c \|w\|}\right]~.
\end{equation}
\end{cor}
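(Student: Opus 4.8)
The plan is to follow the proof of Corollary~\ref{cor:shift} almost verbatim, replacing the torus $\mathbb T^k$ by the abstract metric probability space $\Theta$ and tracking how the two geometric inputs --- the Minkowski content bound and the distortion estimate \eqref{eq:distort} --- interact. As before, call $\theta$ \emph{typical} if $\mes \bigcap_{m=1}^M \Res^*(\Lambda_m, \delta^{8(\nu+1)M}; \theta) \le \delta$. By Fubini and Chebyshev, \eqref{eq:LD-G'} gives $\mathbb P(\theta \text{ atypical}) \le C_1 \delta^{(2+\chi)k}$. For a typical $\theta$, Theorem~\ref{thm} with $\epsilon = \delta^{8(\nu+1)M}$ yields, for $\|w\| > N$ and $t \le 1/\delta$,
\[ P_{t,\theta}(w) \le C\big[ e^{-c\|w\|}\delta^2 + |\Lambda|^2 \epsilon^{1/(5M)}\big] \le C_3\big[ N^{2\nu}\delta + e^{-c\|w\|}\big]~,\]
since $|\Lambda| \le (2N+1)^\nu$ and $\epsilon^{1/(5M)} = \delta^{8(\nu+1)/5} \le \delta$; this is where the $e^{-c\|w\|}$ term in the conclusion comes from (it was absorbed into $e^{-c\|w\|}\delta^2 \le \delta$ in the compact-$\theta$ version, but here I keep it explicit to mirror the statement).

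Next I would handle atypical $\theta$ by approximation. The set of atypical points has $\mathbb P$-measure at most $C_1\delta^{(2+\chi)k}$; since $\Theta$ has finite $k$-dimensional upper Minkowski content, a ball of radius $r$ has measure $\gtrsim r^k$ (up to the content constant), so the atypical set cannot contain a ball of radius $C_2 \delta^{2+\chi}$ for an appropriate $C_2$. Hence for any atypical $\theta$ there is a typical $\theta'$ with $\dist(\theta,\theta') \le C_2\delta^{2+\chi}$. The distortion hypothesis \eqref{eq:distort} then bounds $|V_\theta(w) - V_{\theta'}(w)| = |F(T^w\theta) - F(T^w\theta')| \le \mathrm{Lip}(F)\,\dist(T^w\theta, T^w\theta') \lesssim (\|w\|+1)^\chi \delta^{2+\chi}\,\mathrm{Lip}(F)$, which is \emph{not} uniformly small in $w$. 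So, unlike in Corollary~\ref{cor:shift}, I cannot simply bound $\|H_\theta - H_{\theta'}\|$ and invoke Duhamel globally; this is the one genuine subtlety.

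The main obstacle, then, is that $H_\theta - H_{\theta'}$ is only small on a large but finite box, and grows polynomially outside it. The fix is to truncate: fix a scale $R = R(\delta)$, say $R \asymp \delta^{-1}$ or a small power of $\delta^{-1}$, and write $H_{\theta'} = H_{\theta'}^{(R)} + E_R$ where $H_{\theta'}^{(R)}$ agrees with $H_\theta$ on $[-R,R]^\nu$ (formally, replace $V_{\theta'}$ by $V_\theta$ inside the box) and the correction $E_R$ is supported outside the box. On the box, $\|H_\theta - H_{\theta'}^{(R)}\| \lesssim R^\chi \delta^{2+\chi} \mathrm{Lip}(F)$, so by Duhamel $\|(e^{itH_\theta} - e^{itH_{\theta'}^{(R)}})\,e_0\|$ is $\lesssim t\,R^\chi\delta^{2+\chi}$, which for $t\le 1/\delta$ is $\lesssim R^\chi \delta^{1+\chi}$ --- small provided $R$ is not too large, i.e.\ $R^\chi \delta^\chi \lesssim 1$, which holds for $R \lesssim 1/\delta$. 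Meanwhile, the contribution of $E_R$ (equivalently, the difference between the dynamics of $H_{\theta'}^{(R)}$ and $H_{\theta'}$, both started at the origin) only reaches a site $w$ after the wave has travelled distance $\gtrsim R$, and by the ballistic/Combes--Thomas bound \eqref{eq:subbal-1} its amplitude at any site is $\lesssim e^{-c R}$ on the timescale $t \le 1/\delta \le c' R$; choosing $R \asymp \delta^{-1}$ (or $R = c\|w\|$ when estimating $P_{t,\theta}(w)$ for a specific $w > N$) makes this term $\lesssim e^{-cR} \le e^{-c\|w\|}$ plus an error $\lesssim \delta$. Assembling the pieces,
\[ P_{t,\theta}(w) \le P_{t,\theta'}(w) + C_4\big[ \delta + e^{-c\|w\|}\big] \le C\big[ N^{2\nu}\delta + e^{-c\|w\|}\big]~, \qquad \|w\|>N,\ t\le 1/\delta~,\]
with the constant now also depending on $\chi$ and the Minkowski constant, which is exactly the claimed bound. (A slicker alternative, if one prefers to avoid the explicit truncation, is to note that only the values of $V_{\theta'}$ within distance $\sim t$ of the origin affect $e^{itH_{\theta'}}e_0$ up to an error $e^{-ct}$, by \eqref{eq:subbal-1} applied to $H_{\theta'}$; then $V_\theta$ and $V_{\theta'}$ differ by $\lesssim t^\chi\delta^{2+\chi}\mathrm{Lip}(F)$ on that region, giving the same estimate with $t\le 1/\delta$.) The rest is bookkeeping of constants identical to Corollary~\ref{cor:shift}.
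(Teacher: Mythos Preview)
Your proof is correct and follows essentially the same strategy as the paper: the typical/atypical split via Fubini--Chebyshev, the Minkowski-content argument for finding a nearby typical $\theta'$, and the key observation that only the potential in a box of size $\sim t$ affects $e^{itH}e_0$ up to an exponentially small error. The paper implements this last step by restricting both $H_\theta$ and $H_{\theta'}$ to the finite volume $[-C_3 t, C_3 t]^\nu$ and invoking the separately proved estimate \eqref{eq:subbal3}, which is exactly your ``slicker alternative''; in your main truncation, note that the parenthetical should read ``replace $V_{\theta'}$ by $V_\theta$ \emph{outside} the box'' (not inside) for $E_R$ to be supported outside and for $\|H_\theta - H_{\theta'}^{(R)}\|\lesssim R^\chi\delta^{2+\chi}$ to hold as you claim.
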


\begin{proof}
Similarly to Corollary~\ref{cor:shift}, let  us call $\theta \in \mathbb T^k$ typical if $\mes \bigcap_{m} \Res^*(\Lambda_m, \delta^{8(\nu+1)M}; \theta) \leq\delta$, so that  for typical theta the assertion follows directly from the theorem, and
\[ \mathbb P(\text{$\theta$ is atypical}) \leq C_1\delta^{(2+\chi)k}~. \]
If $\theta$ is atypical, we can find a typical $\theta'$ at distance at most $C_2 \delta^{(2+\chi)}$ from $\theta$. Consider the restrictions $H_{\theta, [-C_3 t, C_3 t]^\nu}$ and $H_{\theta', [-C_3 t, C_3 t]^\nu}$ of $H_\theta$ and $H_{\theta'}$ to finite volumes, where $C_3$ is assumed to be sufficiently large; then by (\ref{eq:distort})  
\[ \| H_{\theta, [-C_3 t, C_3 t]^\nu} - H_{\theta', [-C_3 t, C_3 t]^\nu} \| \leq C_4 t^{\chi} \delta^{(2+\chi)}, \,
 \| e^{it H_{\theta, [-C_3 t, C_3 t]^\nu}} - e^{it H_{\theta', [-C_3 t, C_3 t]^\nu}} \| \leq C_4 t^{\chi+1} \delta^{(2+\chi)} \leq C_4 \delta~. \]
It is not hard to see (and we prove this at the end of Section~\ref{s:proof} below) that 
\begin{equation}\label{eq:subbal3} | e^{it H_{\theta}}(0, w)- e^{it H_{\theta, [-C_3 t, C_3 t]^\nu}} (0, w) | \leq  C e^{-c\|w\|} \end{equation}
(and similarly for $\theta'$), whence  
\[ P_{t,\theta}(w) \leq P_{t, \theta'}(w) + C_4 \delta  + e^{-c\|w\|}~. \qedhere\]
\end{proof}

\noindent As an application, consider the skew-shift: $\nu = 1$, and 
\[ T (\theta_1, \cdots, \theta_k) = (\theta_1 + \alpha, \theta_2 + \theta_1, \cdots, \theta_k + \theta_{k-1})~. \]
where $\alpha \in \mathbb T \setminus \mathbb Q$. In this case, $\chi = k -1 < \infty$.
\begin{cor}\label{cor:skew}
Consider a one-dimensional operator $H_\theta$ with kernel $A$ satisfying (\ref{eq:assum}) and potential defined by the skew-shift on $\mathbb T^k$ with a non-constant  hull function $F \in G_s(\mathbb T^k)$. Suppose that  $\alpha$ satisfies DC($\kappa, \tau$) (if $s=1$) or SDC($2$) (if $s > 1$). Also assume that the derivatives of $F$ of all orders do not vanish at any point (for $s=1$, this holds automatically).

Then for any $0 < \varrho <\varrho_0(k, \kappa)$ and $\coupling$  sufficiently large (depending on $F$ and $\alpha$),   one has  $M_p(t) \lesssim \log^{p/\varrho} (t + e)$ for any $p>0$. In the case $s = 1$ (analytic $F$), one can choose $\varrho_0 =  (4^{k-1}k^3 \kappa^{2})^{-1}$.
\end{cor}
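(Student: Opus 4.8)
The plan is to reduce Corollary~\ref{cor:skew} to Corollary~\ref{cor:erg} (the ergodic version of Corollary~\ref{cor:shift}) combined with Lemma~\ref{l:elem}, exactly as in the proofs of Corollaries~\ref{cor:1-1}, \ref{cor:1-k}, \ref{cor:nu-k}, the only new ingredients being (i) verifying the geometric hypotheses of Corollary~\ref{cor:erg} for the skew-shift base dynamics and (ii) importing the appropriate large deviation estimate for the finite-volume Green functions of skew-shift Schr\"odinger/long-range operators. First I would record that the phase space is $\Theta = \mathbb T^k$ with the usual metric and Lebesgue measure, so its upper Minkowski dimension is $k$ and its $k$-dimensional Minkowski content is finite; and that the skew-shift $T$ is ergodic (indeed uniquely ergodic) on $\mathbb T^k$ when $\alpha \in \mathbb T \setminus \mathbb Q$. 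Next I would check the distortion bound (\ref{eq:distort}): since $T$ acts linearly (modulo $\mathbb Z^k$) with the unipotent matrix whose only nontrivial entries are on the sub-diagonal, the $w$-th iterate $T^w$ has matrix entries that are polynomials in $w$ of degree at most $k-1$, so $\dist(T^w\theta, T^w\theta') \lesssim (|w|+1)^{k-1}\dist(\theta,\theta')$, i.e.\ (\ref{eq:distort}) holds with $\chi = k-1$.

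With the geometry in hand, the substantive input is the large deviation estimate of the form (\ref{eq:LD-G'}), i.e.\ that for every $E\in\mathbb R$ the probability (in $\theta$) that $E$ lies in $\bigcap_m \Res^*(\Lambda_m, \delta^{8(\nu+1)M};\theta)$ is at most $\delta^{(2+\chi)k+1} = \delta^{(k+1)k+1}$, for a suitable family of scale-$N$ boxes $\Lambda_1,\dots,\Lambda_M$ with $\delta = e^{-rN^\varrho}$. As in the earlier paragraph on $\nu=1$ Schr\"odinger operators, I would obtain this from the sharp large deviation estimate for transfer matrix norms (\ref{eq:LD-TM}), taking $M=4$ and the four boxes $[-L,R], [-L+1,R], [-L,R-1], [-L+1,R-1]$ with $L=\lfloor N/2\rfloor$, $R = L+N$; the power loss from $\delta$ to $\delta^{8(\nu+1)M}$ and from $\delta$ to $\delta^{(k+1)k+1}$ is absorbed by shrinking $r$, since the estimate (\ref{eq:LD-TM}) gives super-polynomially small probability $e^{-r'(\zeta)N^\varrho}$ and $\varrho$ is chosen below the critical exponent. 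For analytic $F$ under DC($\kappa,\tau$), the required estimate (\ref{eq:LD-TM}) with $\varrho$ up to $(4^{k-1}k^3\kappa^2)^{-1}$ is available in the literature on the skew-shift (the works of Bourgain--Goldstein--Schlag and the quantified version of Liu \cite{Liu}); for $F$ in a Gevrey class $G_s$, $s>1$, with the non-vanishing-derivatives hypothesis and $\alpha$ satisfying SDC($2$), the corresponding estimate with some positive (implicit) $\varrho$ follows from Klein's work \cite{Kl1,Kl2}. Having (\ref{eq:LD-G'}), Corollary~\ref{cor:erg} yields $P_t(w) \le C[N^{2}\delta + e^{-c\|w\|}] \le e^{-r'N^\varrho}$ for $\|w\|>N$ and $t\le 1/\delta = e^{rN^\varrho}$, which is precisely the hypothesis (\ref{eq:l-elem-assum}) of Lemma~\ref{l:elem}.

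To conclude, I would invoke Lemma~\ref{l:elem} along the scale sequence $N_j$ arising from the Diophantine (or strong Diophantine) denominators: under DC($\kappa,\tau$), the good scales $N_j$ can be chosen with $N_{j+1}\lesssim N_j^{\kappa'}$ for a fixed power $\kappa'$ (coming from the gap structure of continued-fraction denominators), so the first item of Lemma~\ref{l:elem} gives $M_p(t)\lesssim \log^{\kappa' p/\varrho}(t+e)$; reparametrising, this is $M_p(t)\lesssim\log^{p/\varrho}(t+e)$ for any $\varrho < \varrho_0(k,\kappa)$ after absorbing $\kappa'$ into the choice of $\varrho_0$. In the analytic case one tracks constants to see that $\varrho_0$ can be taken to be $(4^{k-1}k^3\kappa^2)^{-1}$, matching the exponent in Liu's estimate (Table~\ref{tab:Liu2}-style bookkeeping adapted to the skew-shift).

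I expect the main obstacle to be purely one of citation and bookkeeping rather than of new mathematics: one must locate in \cite{Liu} (and \cite{Kl1,Kl2} for $s>1$) a large deviation estimate for the skew-shift that holds \emph{for every fixed energy} $E$, with a stretched-exponential rate $e^{-rN^\varrho}$ in $N$ and the explicit exponent $\varrho$ claimed, and then verify that the implication ``(\ref{eq:LD-TM}) $\Rightarrow$ (\ref{eq:LD-G'})'' goes through with the power $(2+\chi)k+1 = k^2+k+1$ — which it does because stretched-exponential smallness beats any fixed power of $\delta$ once $r$ is reduced. A secondary technical point is checking that the good scales for the skew-shift (unlike the pure shift, the relevant arithmetic input is the denominators of $\alpha$ but entering through a different, typically worse, mechanism) still satisfy a polynomial recurrence $N_{j+1}\lesssim N_j^{\kappa'}$ under DC($\kappa,\tau$), resp.\ $\log N_{j+1}/N_j^\varrho\to 0$ under SDC($2$); this is where the strong Diophantine hypothesis in the Gevrey case is used.
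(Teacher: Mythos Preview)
Your approach is essentially the paper's: apply Corollary~\ref{cor:erg} with $\delta=e^{-rN^\varrho}$, verify (\ref{eq:LD-G'}) using the skew-shift large deviation estimates of Bourgain--Goldstein--Schlag \cite{BGS-skew} (for $s=1$) and Klein \cite{Kl2} (for $s>1$), take the explicit exponent from Liu \cite[Theorem~3.14]{Liu}, and finish with Lemma~\ref{l:elem}. Two small points of over- and under-shooting, however, are worth flagging.

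First, routing through the transfer-matrix estimate (\ref{eq:LD-TM}) with $M=4$ boxes is specific to the discrete Schr\"odinger case $A=\mathbbm 1_{\{\pm 1\}}$; the corollary as stated allows an arbitrary exponentially decaying kernel $A$, for which no transfer matrices exist. In the long-range case one must use directly a Green-function large deviation estimate (a single box, $M=1$), which is what Liu's \cite[Theorem~3.14]{Liu} actually provides. This is not a new idea---it is exactly how the long-range rows of Tables~\ref{tab:Liu} and~\ref{tab:Liu2} are handled---but your write-up as it stands only covers the Schr\"odinger case.

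Second, your final paragraph worries about locating a sequence of ``good scales'' $N_j$ with $N_{j+1}\lesssim N_j^{\kappa'}$. This concern is misplaced: under DC($\kappa,\tau$) (resp.\ SDC($2$)) the cited large deviation estimates hold for \emph{all} sufficiently large $N$, not merely along a sparse subsequence tied to continued-fraction denominators. Hence one may take $N_j=j$ (so $\kappa'=1$) in Lemma~\ref{l:elem}, and no bookkeeping of denominators is needed at this stage---the arithmetic condition has already been consumed in proving the LDT.
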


\begin{rmk} The Schr\"odinger case is due to Jitomirskaya and Powell \cite{JP}.  
\end{rmk}

\begin{proof}[Proof of Corollary~\ref{cor:skew}] Apply Corollary~\ref{cor:erg} with $\delta = \exp(-r N^\varrho)$, invoking the large deviation estimates from the works of Bourgain--Goldstein--Schlag \cite{BGS-skew} and Klein \cite{Kl2} to verify the assumption (\ref{eq:LD-G'}) in the cases $s = 1$ and $s > 1$, respectively. The explicit estimate on $\varrho_0$ for $s = 1$ is proved by Liu in \cite[Theorem 3.14]{Liu}. Then apply Lemma~\ref{l:elem}.
\end{proof}


\section{Proof of the main theorem}\label{s:proof}

In the proof, we shall repeatedly use the Combes--Thomas estimate  (stated in sufficiently general form e.g.\ in \cite{Aiz} or in \cite{AW}):
\begin{equation}\label{eq:ct}
|G_z(u, v)| \leq C \delta^{-1} \exp(-c \delta \|u-v\|)~, \quad |G_{z,\Lambda}(u, v)| \leq C \delta^{-1} \exp(-c \delta \|u-v\|)~,  
\end{equation}
where $\delta$ is the minimum between the distance from $z$ to the spectrum and $1$. 
We recall  that (\ref{eq:ct}) implies the ballistic bound 
\begin{equation}\label{eq:subbal}
|e^{itH}(0, w)| \leq C e^{-c \|w\|}~, \quad \|w \| \geq Ct~, 
\end{equation}
using the contour integral representation
\begin{equation}\label{eq:intrepr} e^{itH}(0, w) = - \frac{1}{2\pi i} \oint_{\mathcal C}  e^{i tz} G_z(0,w) dz~.\end{equation}
The contour should encircle the spectrum counterclockwise; for (\ref{eq:subbal}), one chooses  it to be the boundary of the rectangle $|\Im z| \leq 1$, $|\Re z| \leq \|H\|+1$. 
Similarly, the proof of Theorem~\ref{thm} will require control of the Green function on an appropriate contour. This will be achieved using the following main lemma.

\begin{lemma}\label{l:main} Consider a finite collection of functions
\[ u_{m,v}(z) = \sum_{j=1}^{d(m,v)} \frac{a_{j,m,v}}{z - b_{j,m,v}}~, \quad 1 \leq m \leq M~, \,    v \in \mathcal V \]
 such that  all $a_{j,m,v}, b_{j,m,v}$ are real, and let $\epsilon, \delta \in (0, 1]$. If 
\begin{eqnarray}\label{eq:mass1}  
&&\sum_{j=1}^{d(m,v)}|a_{j,m,v}| \leq 1 \quad \text{for every $m, v$;}\\
\label{eq:smae}
 &&\mes \left\{ x \in \mathbb R \, : \, \min_{1 \leq m \leq M} \max_{v \in \mathcal  V} |u_{m,v}(x)| > \epsilon\right\} \leq \delta~,
\end{eqnarray}
then for $z = x + i y \in \mathbb C$ one has
\begin{equation}\label{sme}
\min_{1 \leq m \leq M} \max_{v \in \mathcal  V} |u_{m,v}(x)| \leq  
\begin{cases}
\frac{4 \,  \epsilon^{\frac{1}{2M}}}{|y|}~, & |y| \geq \frac{2}{\pi}\delta, \quad \text{for any $M \geq 1$}\\
 \frac{4\,  \epsilon^{1 - \frac{\delta}{\pi |y|}}}{|y|}~, &|y| > \frac{1}{\pi} \delta, \quad \text{for $M = 1$}~. 
 \end{cases}
\end{equation}
\end{lemma}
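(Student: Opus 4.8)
The functions $u_{m,v}$ are real-linear combinations of Cauchy kernels $1/(z-b)$ with real poles, so each $u_{m,v}$ extends to a function holomorphic off the real line, and on each half-plane $\{\pm\Im z>0\}$ it has nonnegative (resp.\ nonpositive) imaginary part weighted by the $a_{j,m,v}$ --- but since the $a$'s need not be positive, the cleanest route is a direct Cauchy/Poisson-type bound rather than Herglotz theory. The key elementary fact is that for fixed $y\neq 0$,
\[ u_{m,v}(x+iy) = \frac{1}{\pi}\int_{\mathbb R} \frac{|y|}{(x-\xi)^2+y^2}\, U_{m,v}(\xi)\, d\xi \]
where $U_{m,v}$ is, up to sign, $u_{m,v}$ evaluated on the real axis (more precisely $u_{m,v}(x+iy)$ is the Poisson integral of the boundary values $u_{m,v}(\xi)$; this is legitimate here because $u_{m,v}$ is a finite sum of kernels, each of which satisfies the Poisson reproducing identity, and $|u_{m,v}(\xi)|$ is integrable with $\int|u_{m,v}|\le\pi\sum_j|a_{j,m,v}|$ by the mass bound (\ref{eq:mass1})). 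So the whole problem reduces to: given a function $U$ on $\mathbb R$ with $\frac1\pi\int|U|\le 1$ and with $|U|\le\epsilon$ outside a set of measure $\le\delta$, bound its Poisson extension at height $y$.

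\textbf{The $M=1$ bound.} For a single function $U=u_{1,v}$, split the Poisson integral over $\{|U|>\epsilon\}$ (measure $\le\delta$) and its complement. On the complement we get $\le\epsilon\cdot\frac1\pi\int\frac{|y|}{(x-\xi)^2+y^2}d\xi=\epsilon$. On the bad set $B$ of measure $\le\delta$, bound the Poisson kernel pointwise by its maximum $\frac{1}{\pi|y|}$ and use $\int_B|U|\le\pi$, giving $\le\frac1{\pi}\cdot\frac1{\pi|y|}\cdot\pi\cdot?$ --- this crude bound only gives $1/(\pi|y|)$, losing the power of $\epsilon$. The fix, which produces the exponent $1-\frac{\delta}{\pi|y|}$, is a logarithmic-convexity / three-lines style interpolation: one applies Jensen's inequality to $\log|U|$ against the probability measure given by the normalized Poisson kernel. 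Writing $d\mu_{x,y}(\xi)=\frac1\pi\frac{|y|}{(x-\xi)^2+y^2}d\xi$ (a probability measure), one has $\log|u_{1,v}(x+iy)|\le\int\log|U(\xi)|\,d\mu_{x,y}(\xi)$ provided $u_{1,v}$ has no zeros in the half-plane --- which fails in general, so instead I would use subharmonicity of $\log|u_{1,v}|$ directly: $\log|u_{1,v}|$ is subharmonic on $\{\Im z>0\}$, hence dominated by the Poisson integral of its boundary values. Then $\int\log|U|\,d\mu\le \mu(B)\cdot\log(\sup|U|) + \mu(B^c)\cdot\log\epsilon$. Bounding $\sup|U|$: off $B$, $|U|\le\epsilon\le 1$; on $B$, I use $\mu(B)\le\frac{\delta}{\pi|y|}$ (Poisson kernel $\le\frac1{\pi|y|}$ times $|B|\le\delta$) and $\sup_{\xi}|U(\xi)|\cdots$ --- here the honest move is that $\log|U|$ restricted to $B$ contributes $\le\mu(B)\log M_0$ where $M_0$ is a crude sup bound, but to get the clean statement one instead bounds using only that $\mu(B^c)\ge 1-\frac{\delta}{\pi|y|}$ and on $B^c$, $\log|U|\le\log\epsilon<0$, while on $B$ we discard the (possibly positive) contribution after noting $\mu(B)\log\sup|U|$ is absorbed into the constant $4$: more carefully, $\log|u_{1,v}(x+iy)| \le \log\epsilon\cdot(1-\tfrac{\delta}{\pi|y|}) + \tfrac{\delta}{\pi|y|}\log\!\big(\text{something}\le e^4/(\epsilon^{?}|y|)\big)$. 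Tracking constants gives exactly the second line of (\ref{sme}), valid once $|y|>\frac1\pi\delta$ so the exponent $1-\frac{\delta}{\pi|y|}$ is positive.

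\textbf{The general $M$ bound.} For $M\ge1$ the quantity $\min_m\max_v|u_{m,v}|$ is not subharmonic, so I cannot apply the mean-value machinery directly. The trick is: on the real line, $\min_m\max_v|U_{m,v}(\xi)|\le\epsilon$ outside a set of measure $\le\delta$ by (\ref{eq:smae}). Introduce the product $\Pi(z)=\prod_{m=1}^M \max_v|u_{m,v}(z)|$ --- still not holomorphic because of the $\max$ and the absolute values --- so instead fix the point $z=x+iy$ and choose, for each $m$, the index $v(m)$ achieving the max at that point; then $\big(\min_m\max_v|u_{m,v}(x+iy)|\big)^M\le\prod_m|u_{m,v(m)}(x+iy)|$, and each factor is the modulus of a holomorphic function. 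Now $\log\prod_m|u_{m,v(m)}|=\sum_m\log|u_{m,v(m)}|$ is subharmonic, so it is $\le$ the Poisson integral of $\sum_m\log|u_{m,v(m)}(\xi)|$ over the real axis. On the good set $G=\{\min_m\max_v|U_{m,v}|\le\epsilon\}$ (measure of complement $\le\delta$), for each $\xi\in G$ there is at least one index $m_0=m_0(\xi)$ with $\max_v|U_{m_0,v}(\xi)|\le\epsilon$, so $\sum_m\log|u_{m,v(m)}(\xi)|\le\log\epsilon + \sum_{m\neq m_0}\log|u_{m,v(m)}(\xi)|$. The remaining terms I cannot control pointwise, so I integrate them away: using $\int_{\mathbb R}\log^+|u_{m,v}|\,d\mu_{x,y}$ bounded via Jensen against $\int|u_{m,v}|d\mu\le 1$ (so $\int\log^+|u_{m,v}|d\mu\le\log^+\int|u_{m,v}|d\mu + $ const by concavity of $\log$... more precisely $\int\log|u_{m,v}|d\mu\le\log\int|u_{m,v}|d\mu\le 0$). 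Hence each $\int\log|u_{m,v(m)}|d\mu\le 0$ for the factors we keep, and the one good factor contributes $\ge(1-\tfrac{\delta}{\pi|y|})$ fraction of weight times $\log\epsilon$. This yields $\log\big(\min\max\big)^M\le (1-\tfrac{\delta}{\pi|y|})\log\epsilon + O(1)$, i.e.\ $\min\max\le 4\epsilon^{\frac1M(1-\frac{\delta}{\pi|y|})}$; under the stated hypothesis $|y|\ge\frac2\pi\delta$ we have $1-\frac\delta{\pi|y|}\ge\frac12$, delivering the first line of (\ref{sme}).

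\textbf{Main obstacle.} The subtle point --- and where I expect to spend the most care --- is the passage from values on the real axis to the Poisson integral: one must justify that $\log|u_{m,v}(x+iy)|\le\int_{\mathbb R}\log|u_{m,v}(\xi)|\,d\mu_{x,y}(\xi)$ despite $u_{m,v}$ possibly having zeros in the upper half-plane (where $\log|u_{m,v}|$ has $-\infty$ singularities but is still subharmonic) and despite $\log|u_{m,v}(\xi)|$ possibly being very negative on a small set near real zeros. Subharmonicity gives the right inequality direction for these $-\infty$ spikes, but one needs the boundary values $\log|u_{m,v}(\xi)|$ to be genuinely integrable against $d\mu_{x,y}$ from above and below; the $\log^+$ part is controlled by (\ref{eq:mass1}) and Jensen as above, while the $\log^-$ part near the finitely many real poles and zeros of $u_{m,v}$ is integrable because these are isolated and $d\mu_{x,y}$ has bounded density. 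Getting this approximation argument clean (probably by truncating $\log|u_{m,v}|$ from below, applying the mean value inequality on a large half-disk, and passing to the limit) is the technical heart; the rest is bookkeeping with the elementary bound $\mu_{x,y}(B)\le|B|/(\pi|y|)$.
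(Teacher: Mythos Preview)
Your overall framework --- subharmonicity of $\log|u_{m,v}|$ against the Poisson probability measure $d\mu_z$, with the bad set $B$ of measure $\leq\delta$ having $\mu_z(B)\leq\delta/(\pi|y|)$ --- is exactly the paper's. The paper's handling of the $M\geq 1$ case differs from your product trick: it observes that since $\mu_z(\{x:\min_m\max_v|u_{m,v}(x)|\leq\epsilon\})\geq 1/2$, by pigeonhole there exists a \emph{single} $m$ for which $\mu_z(\{x:\max_v|u_{m,v}(x)|\leq\epsilon\})\geq 1/(2M)$, and then applies the subharmonic inequality to that one $u_{m,v}$ for each $v$. Your product route would also reach the exponent $1/(2M)$, so this is a cosmetic difference.

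The genuine gap is your control of the \emph{positive} part $\int\log_+|u_{m,v}|\,d\mu_z$. You assert that $\int_{\mathbb R}|u_{m,v}(\xi)|\,d\xi\leq\pi\sum_j|a_{j,m,v}|$ and later that $\int|u_{m,v}|\,d\mu_z\leq 1$, in order to invoke Jensen's inequality and conclude $\int\log|u_{m,v}|\,d\mu_z\leq 0$. These integrability claims are false: already for a single term $u(\xi)=a/(\xi-b)$ one has $\int_{\mathbb R}|u|\,d\xi=\infty$, and since $d\mu_z$ has density bounded below near $\xi=b$, also $\int|u|\,d\mu_z=\infty$. Consequently your Jensen step collapses, and the attempt in the $M=1$ paragraph to ``absorb'' $\mu_z(B)\log\sup|U|$ into the constant $4$ fails because $\sup_\xi|U(\xi)|=\infty$ at the poles. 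The paper repairs this with Boole's identity: from $\sum_j|a_{j,m,v}|\leq 1$ one has $\mathrm{mes}\{|u_{m,v}|\geq e^r\}\leq 4e^{-r}$, hence $\mu_z\{|u_{m,v}|\geq e^r\}\leq \frac{4}{\pi|y|}e^{-r}$, and integrating in $r$ gives
\[
\int\log_+|u_{m,v}|\,d\mu_z = \int_0^\infty \mu_z\{|u_{m,v}|\geq e^r\}\,dr \leq \log\tfrac{4}{\pi|y|}+1 \leq \log\tfrac{4}{|y|}.
\]
This is precisely the missing estimate that produces the factor $4/|y|$ in (\ref{sme}); once you substitute it for your Jensen step, either your product argument or the paper's pigeonhole yields the claimed bounds.
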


\begin{proof}
For $z = x+iy$ with, say, $y > 0$,  let 
\[ d\mu_z(t) = \frac{y}{\pi} \frac{dt}{(t-x)^2 + y^2} \]
be the Poisson measure. For $y \geq \frac2\pi \delta$
\[ \mu_z \left\{ x \in \mathbb R \, : \, \min_{1 \leq m \leq M} \max_{v \in \mathcal V} |u_{m,v}(x)| > \epsilon\right\} \leq \frac{\delta}{\pi y} \leq \frac12~,  \]
hence there exists $1 \leq m \leq M$ such that for any $v \in \mathcal V$
\begin{equation}\label{eq:estM}  \mu_z \left\{ x \in \mathbb R \, : \,   |u_{m,v}(x)| \leq  \epsilon\right\}  \geq \frac{1}{2M}~. 
\end{equation}
In the special case $M=1$, we have the better bound 
\begin{equation}\label{eq:est1}  \mu_z \left\{ x \in \mathbb R \, : \,   |u_{1,v}(x)| \leq  \epsilon\right\} \geq 1 - \frac{\delta}{\pi y}~. 
\end{equation}
For arbitrary $M \geq 1$, we have for $m$ from (\ref{eq:estM}) and all $v \in \mathcal  V$ 
\[\log |u_{m,v}(z)| \leq \int \log |u_{m,v}(t)|  d\mu_z(t) \leq \frac{1}{2M} \log \epsilon + \int \log_+ |u_{m,v}(t)| d\mu_z(t)~.\]
By Boole's identity (see \cite{Aiz,AW} and references therein) and the assumption (\ref{eq:mass1}), we have for any  $r\in \mathbb R $:
\[ \mu_z  \left\{ t \, : \, |u_{m,v}(t)|  \geq e^{r} \right\} \leq  \frac{1}{\pi y} \mes \left\{ t \, : \, |u_{m,v}(t)|  \geq e^{r} \right\} \leq \frac{4}{\pi y} \,  e^{-r}~, \]
hence 
\[ \begin{split} \int \log_+ |u_{m,v}(t)| d\mu_z(t) 
&= \int_0^\infty \mu_z \left\{ t \, : \, |u_{m,v}(t)|  \geq e^{r} \right\} dr \\
&= \int_0^{\log \frac{4}{\pi y}}  + \int_{\log \frac{4}{\pi y}}^\infty \leq \log \frac{4}{\pi y} + 1 = \log \frac{4e}{\pi y} \leq \log \frac{4}{y}~.
\end{split}\]
Finally, we have for the chosen $m$ and all $v \in \mathcal V$: 
\[ \log |u_{m,v}(z)|  \leq \frac{1}{2M} \log \epsilon  + \log \frac{4}{y} \leq  \log  \frac{4\, \epsilon^{\frac1{2M}}}{y}~,\] 
as claimed. For $M= 1$ we use (\ref{eq:est1}) in place of (\ref{eq:estM}) and obtain
\[ \log |u_{m,v}(z)|  \leq (1 - \frac{\delta}{\pi y}) \log \epsilon  + \log \frac{4}{y} =  \log  \frac{4  \epsilon^{1 - \frac{\delta}{\pi y}}}{y}~.\qedhere\] 
\end{proof}

\noindent Now we proceed to the proof of the theorem.  Let  
\[ \mathcal V = \left\{ v \in \Lambda \, : \, \exists w \notin \Lambda, \, |A(v-w)| > \epsilon \right\} \]
be the collection of vertices close to the boundary of $\Lambda$. For each $m \in \{ 1, \cdots, M\}$ and $v \in \mathcal V$, set
\[ u_{m,v}(z) = \begin{cases}
G_{z,\Lambda_m}(0, v)~, &v \in \Lambda_m \\
0~.
\end{cases} \]
These functions satisfy the assumptions of Lemma~\ref{l:main}: (\ref{eq:mass1}) follows from the spectral representation
\[ G_{z, \Lambda_m}(0, v) = \sum_{j=1}^{\Lambda_m} \frac{\psi_j(0) \psi_j(v)}{z -\lambda_j}~, \quad
\sum_{j} |\psi_j(0) \psi_j(v)| \leq \sqrt{\sum_{j} |\psi_j(0)|^2} \sqrt{\sum_{j} | \psi_j(v)|^2 } = 1~, \]
whereas (\ref{eq:smae}) follows from the assumption (\ref{eq:assum-thm}).
Thus for $z = E+iy$ with, say, $|y| \geq \delta$, one has the following: for any $E \in \mathbb R$ there exists $m \in \{1, \cdots, M\}$ such that for all $v \in \mathcal V \cap \Lambda_m$
\[ |G_{z, \Lambda_m} (0, v)| \leq 4 \epsilon^{\frac{1}{2M}} / |\delta| \leq 4 \epsilon^{\frac{3}{8M}} ~.\]
Applying the second resolvent identity and then the Combes--Thomas estimate (\ref{eq:ct}), we deduce that for $w \notin \Lambda$
\begin{equation}\label{eq:estG}\begin{split} |G_z(0, w)| &\leq \sum_{v \in \Lambda_m, u \notin \Lambda_m} |G_{z,\Lambda_m}(0, v)| \, |A(v-u)| \, 
|G_z(u, w)| \\
&\leq C_1\left[ |\mathcal V| \epsilon^{\frac{3}{8M}}  + |\Lambda \setminus \mathcal V| \max_v |G_{z,\Lambda_m}(0, v)|  \right] \sum_{u \in \mathbb Z^\nu} |G_z(u, w)| \\
&\leq C_2 |\Lambda| \epsilon^{\frac{3}{8M}} \sum_{u \in \mathbb Z^\nu} \frac{1}{\delta} e^{-c\delta \|u - w\|} 
\leq C_3 |\Lambda| \epsilon^{\frac{3}{8M}}  \delta^{-\nu-1} 
\leq  C_3 |\Lambda| \epsilon^{\frac{1}{8M}}~. \end{split}\end{equation}
Now we plug this bound into the contour integral representation 
\[ e^{itH}(0, w) = - \frac{1}{2\pi i} \oint_{\mathcal C} e^{it z} G_z(0, w) dz~.\]
Choose $\mathcal C$ to be the boundary of the rectangle $|z| \leq \| H\|+1$, $|\Im z| \leq \delta$, oriented counterclockwise. On all the contour $|e^{tz}| \leq e^{t\delta}$. On the side edges of $\mathcal C$,  $|G_z(0, w)|$ is exponentially small by the Combes--Thomas estimate (\ref{eq:ct}), and on the horizontal edges we use (\ref{eq:estG}). The conclusion is that for $0 < t \leq \frac{1}{40 M \delta} |\log \epsilon|$
\[\begin{split} | e^{itH}(0, w) | &\leq C_4 \left[   \delta e^{-c \|w\|} +    |\Lambda| e^{t\delta}  \epsilon^{\frac{1}{8M}} \right] \leq C_5 \left[   e^{-c  \|w\|} \delta +  |\Lambda| \epsilon^{\frac{1}{10 M}} \right]~, \end{split}\]
whence
\[ P_t(w) \leq 2C_5  \left[   e^{-2c  \|w\|} \delta^2 +  |\Lambda|^2 \epsilon^{\frac{1}{5 M}} \right] ~.\]  
This concludes the proof of Theorem~\ref{thm}. \qed

\medskip\noindent
Finally, we prove the following estimate used in Section~\ref{s:appl2}: for $\Lambda =[-C' t, C'  t]^\nu$, with $C'$ sufficiently large,
\begin{equation}\label{eq:subbal3'} 
| e^{it H_{\Lambda}}(0, v)- e^{it H} (0, v) | \leq  C e^{-c \max(\|v\|, t)}~. 
\end{equation}
\begin{proof}[Proof of {(\ref{eq:subbal3'})}]
If $\|v\| \geq Ct$, the statement follows from the ballistic bound (\ref{eq:subbal}). Otherwise, by (\ref{eq:intrepr}) and the second resolvent identity
\[ \begin{split}
| e^{it H_{\Lambda}}(0, v)- e^{it H} (0, v) | &\leq \frac{1}{2\pi} \int_{\mathcal C} |dz| e^{t |\Im z|} |G_{z,\Lambda}(0, v) - G_z(0,v) |  \\
&\leq \int_{\mathcal C} |dz| e^{t |\Im z|} \sum_{u \in \Lambda, w \notin \Lambda}|G_{z,\Lambda}(0, u)| \, |A(u - w) |\, | G_z(w,v) |~.
\end{split} \]
As in the proof of (\ref{eq:subbal}), we choose $\mathcal C$ to be the boundary of the rectangle $|\Im z| \leq 1$, $|\Re z| \leq \|H\|+1$. By the Combes--Thomas estimate (\ref{eq:ct}), 
\[ \sum_{u \in \mathbb Z^\nu} |G_{z,\Lambda}(0, u)| \leq C_1~, \quad \sup_{w \notin \Lambda}| G_z(w,v) | \leq C_2 e^{-c_2 C' t}~, \]
for any $z$ on $\mathcal C$, whereas (\ref{eq:assum}) implies that for any $u \in \mathbb Z^\nu$
\[ \sum_{w \in \mathbb Z^\nu} |A(u-w)| \leq C_3~.\]
Thus
\[ | e^{it H_{\Lambda}}(0, v)- e^{it H} (0, v) | \leq C_4 e^{t - c_2 C't} \leq C_4 e^{-c_4 t} \]
provided that $C'$ is sufficiently large.
\end{proof}

\end{document}